\documentclass{amsart}

%
%
%
 \newtheorem{thm}{Theorem}[section]
 \newtheorem{cor}[thm]{Corollary}
 \newtheorem{lem}[thm]{Lemma}
 \newtheorem{prop}[thm]{Proposition}
 \theoremstyle{definition}
 \newtheorem{defn}[thm]{Definition}
 \theoremstyle{remark}

 \numberwithin{equation}{section}

\begin{document}
\title[Analysis of a special type of soliton on Kenmotsu manifolds]
{Analysis of a special type of soliton on Kenmotsu manifolds}

\author[S. Mondal]{Somnath Mondal}
\address{Department of Mathematics, Jadavpur University, Kolkata-700032, India}
\email{somnathmondal.math@gmail.com}

\author[M A. Khan]{Meraj Ali Khan}
\address{Department of Mathematics and Statistics, College of Science, Imam Mohammad Ibn Saud Islamic University, Saudi Arabia}
\email{mskhan@imamu.edu.sa}

\author[S. Dey]{Santu Dey}
\address{Department of Mathematics\\
Bidhan Chandra College\\
Asansol-4, West Bengal-713304, India.}
\email{santu.mathju@gmail.com, santu@bccollegeasansol.ac.in}

\author[A. K. Sarkar]{Ashis Kumar Sarkar}
\address{Department of Mathematics, Jadavpur University, Kolkata-700032, India}
\email{aksarkar@gmail.com}

\author[C. Ozel]{Cenap Ozel} 

\address{ 
Department of Mathematics, College of Science\\
King Abdulaziz University\\
Jeddah\\
Saudi Arabia}
  \email{cenap.ozel@gmail.com}
  
\author[A. Pigazzini]{Alexander Pigazzini} 

\address{ 
Mathematical and Physical Science Foundation\\ Sidevej 5, 4200 Slagelse\\ Denmark}
  \email{pigazzini@topositus.com}
  
\author[R. Pincak]{Richard Pincak} 

\address{ 
Institute of Experimental Physics, Slovak Academy of Sciences, Watsonova 47, 043 53 Košice, Slovak Republic}
  \email{pincak@saske.sk}

\subjclass[2020]{ 53B15, 53B20, 53C15, 53C25.}

\keywords{Almost Ricci-Bourguinon soliton, almost $*$-Ricci-Bourguinon soliton, gradient Ricci-Bourguinon soliton, $(\kappa,\mu)$-almost Kenmotsu manifold, $(\kappa, \mu)'$-almost Kenmotsu manifold}

\begin{abstract}
In this paper,  we aim to investigate the properties of an almost $*$-Ricci-Bourguignon soliton (almost $*-$R-B-S for short) on a Kenmotsu manifold (K-M). We start by proving that if a Kenmotsu manifold (K-M) obeys an almost $*-$R-B-S, then the manifold is $\eta$-Einstein. Furthermore, we establish that if a $(\kappa, -2)'$-nullity distribution, where $\kappa<-1$, has an almost $*$-Ricci-Bourguignon soliton (almost $*-$R-B-S), then the manifold is Ricci flat. Moreover, we establish that if a K-M has almost $*$-Ricci-Bourguignon soliton gradient and the vector field $\xi$ preserves the scalar curvature $r$, then the manifold is an Einstein manifold with a constant scalar curvature given by $r=-n(2n-1)$. Finaly, we have given en example of a almost $*-$R-B-S gradient on the Kenmotsu manifold.
\end{abstract}

\maketitle

\section{Introduction and Motivations}
Contact geometry has gained popularity among researchers due to its ability to address various issues in basic sciences, medical sciences, and technology. It finds applications in diverse areas such as geometric quantization, control theory, geometric optics, integrable systems, thermodynamics and classical mechanics. In 1972, Kenmotsu \cite{Kenmotsu} introduced tensor equations to characterize a specific class of manifolds, which were subsequently referred to as Kenmotsu manifolds.

\noindent In recent years, geometric flows, particularly the Ricci flow, have garnered significant attention in the field of differential geometry and geometric analysis. The Ricci flow naturally extends the concept of Einstein metrics and has found applications in mathematical fluid dynamics and general relativity. In addition, in modern mathematics, contact geometry has emerged as a powerful tool. It has its roots of classical mechanics in the mathematical formalism. In 1981, a new geometric flow called the Ricci-Bourguignon flow was introduced by Jean-Pierre Bourguignon \cite{jean}. This flow was created based on undisclosed work by Lichnerowicz and a paper by Aubin \cite{aub}. The Ricci-Bourguignon flow can be defined as follows \cite{shubham}:
\begin{defn}
A group of metrics $g(t)$ on an $n$-dimensional Riemannian manifold $(M^n, g)$ evolves according to the Ricci-Bourguignon flow (RB flow for short) if $g(t)$ satisfies the following evolution equation:
\begin{equation}\label{1.1a}
  \frac{\partial g}{\partial t}=-2(\mathcal{S}-\rho rg),
\end{equation}
where $\mathcal{S}$ is the Ricci tensor of the metric, $r$ is the scalar curvature and $\rho \in \mathbb{R}$ is a constant.
\end{defn}

\noindent Based on the above definition, it is evident that when $\rho=0$ in equation \eqref{1.1a}, it reduces to the Ricci flow. In \cite{shubham}, various tensors such as the Einstein tensor, traceless Ricci tensor, the Schouten tensor, and the Ricci tensor are derived for different values of $\rho=\frac{1}{2}$, $\rho=\frac{1}{n}$, $\rho=\frac{1}{2(n-1)}$, and $\rho=0$.

\noindent Recently, Ho \cite{ho} has discussed this flow extensively on locally homogeneous 3-manifolds. Further information on this flow can be established in \cite{biaso}, which explores its application on the product of an Anti de Sitter space with a sphere. Furthermore, a new generic form of soliton, known as the R-B-S, has been introduced by Shubham Dwivedi \cite{shubham}. This soliton acts as an interpolated soliton between the Ricci and Yamabe solitons.

\begin{defn}
An $n$-dimensional Riemannian manifold $(M^n, g)$, where $n \geq 3$, is called almost $*-$R-B-S if there exists a smooth vector field $V$ and a smooth function $\lambda$ defined on $M^n$ such that the following conditions are satisfied:
\begin{center}
$\frac{1}{2}(\mathcal{L}_Vg)(\omega_1,\omega_2)+\mathcal{S}^*(\omega_1,\omega_2)-(\lambda+\rho r^*)g(\omega_1,\omega_2)=0$,
\end{center}
where $\mathcal{S}^*$ is the $*$-Ricci tensor of $g$, $r^*$ is the $*$-scalar curvature and $\mathcal{L}_Vg$ is the Lie derivative of $g$ in the $V$ direction.
\end{defn} \par

\noindent If the smooth function $\lambda$ in the almost $*-$R-B-S is constant, it is called R-B-S. In fact, the R-B soliton generalizes various solitons such as the Einstein soliton ($\rho=\frac{1}{2}$), traceless Ricci soliton ($\rho=\frac{1}{n}$), Schouten soliton ($\rho=\frac{1}{2(n-1)}$), and Ricci soliton ($\rho=0$), among others. Many important ideas in complex geometry have been studied within the field of contact geometry, and the $*$-Ricci tensor and its associated $*$-Ricci soliton are no infringement. Tachibana \cite{tachi} introduced the concept of the $*$-Ricci tensor on almost Hermitian manifolds, and Hamada \cite{hamada} used it to study real hypersurfaces of non-flat complex space forms. Furthermore, it hase been extended to almost contact manifolds $M^{2n+1}(\phi,\xi,\eta,g)$ by defining

\begin{equation}\label{1.1}
\mathcal{S}^*(\omega_1,\omega_2)=\frac{1}{2}Tr_g\{\omega_3\rightarrow R(\omega_1,\phi \omega_2)\phi \omega_3\}
\end{equation}
for any vector fields $\omega_1$, $\omega_2$, $\omega_3$ on $M^{2n+1}(\phi,\xi,\eta,g)$. Additionally, we define the $*$-scalar curvature $r^*$ as the trace of a linear endomorphism $Q^*$, which can be expressed as: $g(Q^*\omega_1,\omega_2)=\mathcal{S}^*(\omega_1,\omega_2)$ for all vector fields $\omega_1, \omega_2 \in \chi(M)$ on $M^{2n+1}$. If the $*$-Ricci tensor vanishes identically, the $2n+1$-dimensional manifold $M^{2n+1}$ is referred to as $*$-Ricci flat. In \cite{geo}, the thought of a $*$-Ricci soliton was initiated on a Riemannian manifold $(M,g)$ by restoring the Ricci tensor with the $*$-Ricci tensor in the Ricci soliton equation. More recently, Patra et al. \cite{patra} initiated the notion of an almost $*-$R-B-S.

\begin{defn}
An $n$-dimensional Riemannian manifold $(M^n, g)$, where $n \geq 3$, is called an almost $*-$R-B-S  if there exists a potential vector field $V$ and a smooth function $\lambda$ defined on $M^n$ such that the following conditions are satisfied:
\begin{equation}\label{1.2}
\frac{1}{2}(\mathcal{L}_Vg)(\omega_1,\omega_2)+\mathcal{S}^*(\omega_1,\omega_2)-(\lambda+\rho r^*)g(\omega_1,\omega_2)=0,
\end{equation}
Here, $\mathcal{S}^*$ is the $*$-Ricci tensor of the metric $g$, $r^*$ denotes the $*$-scalar curvature, and $\mathcal{L}_Vg$ the Lie derivative of $g$ in the direction of the vector field $V$. It is worth noting that an almost $*-$R-B-S is considered trivial if the vector field $V$ is a Killing vector field, meaning $\mathcal{L}_Vg=0$. In this case, the manifold is referred to as a $*$-Einstein manifold.
\end{defn}
\noindent The almost $*-$R-B-S is classified as expanding, steady, or shrinking depending on whether $\lambda > 0$, $\lambda = 0$, or $\lambda < 0$, respectively. If we assume that the potential vector field $V$ is the gradient of a smooth function $f$, the equation for the almost $*-$R-B-S can be rewritten as follows:
\begin{equation}\label{1.3}
Hess_f(\omega_1,\omega_2)+\mathcal{S}^*(\omega_1,\omega_2)-(\lambda+\rho r^*)g(\omega_1,\omega_2)=0.
\end{equation}
The thought of a gradient almost $*-$R-B-S refers to a gradient $*$-R-B-S, where $\lambda$ is considered as a smooth function.

\noindent A interesting problem in differential geometry is to determine conditions under which a Riemannian manifold with a soliton structure is isometric to a Euclidean sphere. Several sufficient conditions have been established for when an almost Ricci soliton on Riemannian or contact metric manifolds is isometric to a Euclidean sphere \cite{bar, desh, desh1}. For example, Catino-Mazzieri \cite{catino1} investigated classification results for both compact and non-compact cases, and proved existence results for rotationally symmetric solutions. In \cite{catino}, short-time existence and curvature estimates were obtained for competent values of the scalar parameter connected in the R-B flow. Additionally, Dwivedi \cite{shubham} established results for solitons of the R-B flow that correspond to generalized results for Ricci solitons.

\noindent Recently, Wang \cite{Wang1} proved that if the metric of a Kenmotsu 3-manifold represents a $*$-Ricci soliton, then the manifold is locally isometric to hyperbolic space $\mathbb{H}^{3}(-1)$. In \cite{catino}, Catino et al. investigated R-B-Ss, discussing important rigidity results and proving that every compact gradient Einstein, Schouten, or traceless Ricci soliton is trivial. Furthermore, Ho \cite{ho} examined gradient shrinking $\rho$-Einstein solitons equipped with Bach flatness, while Huang \cite{huang} obtained integral pinching rigidity results for compact gradient shrinking R-B-Ss. In recent years, Shaikh et al. \cite{shaikh} explored various aspects of gradient R-B-Ss. Additionally, Venkatesha et al. \cite{ven, ven1} studied Ricci solitons on perfect fluid spacetime and demonstrated the nature of $\rho$-Einstein solitons on almost Kenmotsu manifolds.

\noindent Numerous researchers have devoted their efforts to examining $*$-Ricci solitons and their extensions within the realm of contact and paracontact metric manifolds (see \cite{dey, dey1, dey2, dey*, roy*, roy2, tas, yol}). Based on the above research on contact geometry, several intrinsic questions naturally arise when one delves deeper into the topic.\\

\textbf{1.} Are there almost contact metric manifolds whose metrics represent an almost $*-$R-B-S?

\textbf{2.} Does the aforementioned result hold true for a $(2n + 1)$-dimensional $(\kappa, \mu)'$ almost Kenmotsu manifold that admits a $*$-R-B-S?\\

\par
\noindent In the following sections, we will provide definitive answers to the aforementioned questions. The organization of this paper is as follows: Section 2 begins with a concise introduction, followed by a discussion on the preliminaries of contact metric manifolds. In Section 3, we focus on Kenmotsu manifolds that admit an almost $*-$R-B-S. In Section 4, We prove the following theorem:\\
 Let $M^{2n+1}(\phi,\xi,\eta,g)$ be an almost Kenmotsu manifold such that $\xi$ belongs to $(\kappa,-2)'$-nullity distribution where $\kappa<-1$. If the metric $g$ represents an almost $*-$R-B-S satisfying $\lambda\neq-\rho(r+4n^2)-\frac{1}{2}(D \lambda+\rho Dr)$, then $M$ is Ricci-flat and is locally isometric to $\mathbb{H}^{n+1}(-4)\times\mathbb{R}^n$.

\noindent In Section 5, we prove that if a Kenmotsu manifold $M^{2n+1}(\phi,\xi,\eta,g)$ admits a almost $*-$R-B-S gradient, where the vector field $\xi$ leaves the scalar curvature $r$ invariant, then the manifold is an Einstein manifold with a constant scalar curvature $r=-n(2n-1)$.
\\
In the last section, we demonstrate an example of Kenmotsu manifold admitting a gradient almost $*-$R-B-S and verify our results.

\section{Notes on contact metric manifolds}
According to Blair \cite{bla}, a smooth Riemannian manifold $(M,g)$ of dimension $(2n+1)$ is classified as an almost contact metric manifold if it admits a global $1$-form $\eta$, called a contact form, such that $\eta\wedge(d\eta)^2\neq0$, a $(1,1)$ tensor field $\phi$, a characteristic vector field $\xi$, and an indefinite metric $g$.These must satisfy the following relations:

\begin{equation}\label{2.1}
\phi^2(\omega_1)=-\omega_1+\eta(\omega_1)\xi,
\end{equation}
\begin{equation}\label{2.2}
\eta(\xi)=1,
\end{equation}

\noindent where $\omega_1$ is an arbitrary vector field on $M$. The characteristic vector field $\xi$ and the almost $1$-form $\eta$ are commonly referred to as the characteristic or Reeb vector field and almost $1$-form, respectively.

\noindent An associated (or compatible) metric $g$ on the manifold is defined as a Riemannian metric that satisfies the following condition:

\begin{equation}\label{2.3}
g(\phi{\omega_1},\phi{\omega_2})=g(\omega_1,\omega_2)-\eta(\omega_1)\eta(\omega_2),
\end{equation}

\noindent for all vector fields $\omega_1$ and $\omega_2$ on $M$. When an almost contact manifold $M^{2n+1}(\phi,\xi,\eta,g)$ is equipped with a compatible metric $g$, it is referred to as an almost contact metric manifold (as defined by Blair \cite{bla}).

\noindent Let us consider an almost contact metric manifold $M^{2n+1}(\phi,\xi,\eta,g)$ where the following conditions hold true:
\begin{equation}\label{2.4}
\phi\xi=0,
\end{equation}
\begin{equation}\label{2.5}
\eta(\phi \omega_1)=0,
\end{equation}
\begin{equation}\label{2.6}
g(\omega_1,\xi)=\eta(\omega_1),
\end{equation}
\begin{equation}\label{2.7}
g(\phi \omega_1,\omega_2)=-g(\omega_1,\phi \omega_2).
\end{equation}

\noindent For any arbitrary $\omega_1$ and $\omega_2$ belonging to $\chi(M)$, the normality of an almost contact structure can be expressed as the vanishing of the tensor $N_\phi=[\phi,\phi]+2d\eta\otimes\xi$. Here, $[\phi,\phi]$ represents the Nijenhuis tensor associated with $\phi$ (for more details see \cite{bla})
\begin{defn}
On an almost contact metric manifold $M$, a vector field $\omega_1$ is said to be contact vector field if there exist a smooth function $f$ such that $\mathcal{L}_{\omega_1}\xi=f\xi$.
\end{defn}
\begin{defn}
In the context of an almost contact metric manifold $M$, a vector field $\omega_1$ is referred to as an infinitesimal contact transformation if $\mathcal{L}_{\omega_1}\eta=f\eta$ for a certain function $f$. Moreover, when $\mathcal{L}_{\omega_1}\eta=0$, we specifically classify $\omega_1$ as a strict infinitesimal contact transformation.
\end{defn}
\noindent An almost Kenmotsu manifold is defined as an almost contact metric manifold satisfying two conditions: $d\eta=0$ (i.e., $\eta$ is closed) and $d\phi=2\eta\wedge\phi$, where the fundamental $2$-form $\phi$ of the almost contact metric manifold is defined as: $\phi(\omega_1,\omega_2)=g(\omega_1,\phi \omega_2)$ for any vector fields $\omega_1, \omega_2$ on $M$ (see \cite{jan}).
On the product $M^{2n+1}\times\mathbb{R}$ of an almost contact metric manifold $M^{2n+1}$ and $\mathbb{R}$, there exists an almost complex structure $J$ defined by
\begin{center}
$J(\omega_1,f\frac{d}{dt})=(\phi \omega_1-f\xi, \eta(\omega_1)\frac{d}{dt})$,
\end{center}
where $\omega_1$ denotes a vector field tangent to $M^{2n+1}$, $t$ is the coordinate of $\mathbb{R}$ and $f$ is $c^\infty$-function on $M^{2n+1}\times\mathbb{R}$. If $J$ is integrable, then almost contact metric structure on $M^{2n+1}$ is said to be normal. A normal almost Kenmotsu manifold is referred to as a Kenmotsu manifold (see \cite{Kenmotsu}). An almost Kenmotsu manifold can be classified as a Kenmotsu manifold if and only if:
\begin{equation} \label{2.8}
(\nabla_{\omega_1}{\phi})\omega_2=g(\phi \omega_1,\omega_2)-\eta(\omega_2)\phi \omega_1
\end{equation}
for any vector fields $\omega_1, \omega_2$ on $M^{2n+1}$. On a Kenmotsu manifold the following holds \cite{Kenmotsu}:
\begin{equation}\label{2.9}
\nabla_{\omega_1}{\xi}=\omega_1-\eta(\omega_1)\xi \hspace{2.9mm} ( \Rightarrow \nabla_{\omega_1}{\xi}=-\phi \omega_1)
\end{equation}
\begin{equation}\label{2.10}
(\nabla_{\omega_1}\eta)\omega_2=g(\omega_1,\omega_2)-\eta(\omega_1)\eta(\omega_2),
\end{equation}
\begin{equation}\label{2.11}
R(\omega_1,\omega_2)\xi=\eta(\omega_1)\omega_2-\eta(\omega_2)\omega_1,
\end{equation}
\begin{equation}\label{2.12}
Q\omega_1=-2n\omega_1 \hspace{2.9mm} (\Rightarrow Q\xi=-2n\xi),
\end{equation}
\begin{equation}\label{2.13}
(\mathcal{L}_\xi)g(\omega_1,\omega_2)=2g(\omega_1,\omega_2)-2\eta(\omega_1)\eta(\omega_2)
\end{equation}
for any vector fields $\omega_1, \omega_2$ on $M^{2n+1}$ \cite{li}. In this context, where $\mathcal{L}$ denotes the Lie derivative operator, $R$ represents the curvature tensor of $g$, and $Q$ stands for the Ricci operator associated with the $(0,2)$ $*$-Ricci tensor $\mathcal{S}^*$ defined as $\mathcal{S}^*(\omega_1,\omega_2)=g(Q\omega_1,\omega_2)$ for all vector fields $\omega_1, \omega_2$ on $M^{2n+1}$, it has been demonstrated that a Kenmotsu manifold is locally a warped product $I\times_f N^{2n}$. Here, $I$ denotes an open interval with the coordinate $t$, $f=ce^t$ represents the warping function for a positive constants $c$, and $N^{2n}$ is a K$\ddot{a}$hlerian manifold  \cite{Kenmotsu}.

\par
A Kenmotsu manifold of dimension $(2n+1)$ is referred to as an $\eta$-Einstein Kenmotsu manifold if there exist two smooth functions $a$ and $b$ that satisfy the following relation for all $\omega_1$ and $\omega_2$ in $\chi(M)$ \cite{li}:
\begin{equation}\label{2.14}
S(\omega_1,\omega_2)=ag(\omega_1,\omega_2)+b\eta(\omega_1)\eta(\omega_2),
\end{equation}
where $S(\omega_1,\omega_2)$ denotes the $(0,2)$ Ricci tensor, $g$ represents the metric tensor, and $\eta$ is the contact form. If $b=0$, the $\eta$-Einstein manifold reduces to an Einstein manifold. Considering $\omega_1=\xi$ in the last equation and using \eqref{2.12}, we find that $a+b=-2n$. By contracting \eqref{2.14} over $\omega_1$ and $\omega_2$, we obtain $r=(2n+1)a+b$, where $r$ denotes the scalar curvature of the manifold. Solving these two equations yields $a=\frac{1}{2n}(2n+r)$ and $b=-\frac{1}{2n}\{2n(2n+1)+r\}$. Substituting these values into \eqref{2.14}, we can rewrite it as:
\begin{equation}\label{2.15}
S(\omega_1,\omega_2)=\frac{1}{2n}(2n+r)g(\omega_1,\omega_2)-\frac{1}{2n}\{2n(2n+1)+r\}\eta(\omega_1)\eta(\omega_2).
\end{equation}

\noindent In the context of an almost Kenmotsu manifold, we introduce two $(1,1)$-type tensor fields: $h=\frac{1}{2}\mathcal{L}_\xi\phi$ and $h'=h\circ\phi$. Additionally, we consider an open operator $\ell=R(.,\xi)\xi$, where $\mathcal{L}_\xi\phi$ is the Lie derivative of $\phi$ along the direction $\xi$. The tensor field $h$ and $h'$ plays an important role in an almost Kenmotsu manifold. Both of them are symmetric and satisfies the following relations \cite{sal}:
\begin{equation}\label{2.16}
\nabla_{\omega_1}{\xi}=\omega_1-\eta(\omega_1)\xi-\phi h\omega_1 (\nabla_\xi{\xi}=0),
\end{equation}
\begin{equation}\label{2.17}
h\xi=h'\xi=0,
\end{equation}
\begin{equation}\label{2.18}
h\phi+\phi h=0, tr(h)=tr(h')=0
\end{equation}
for any $\omega_1$, $\omega_2$ $\in$ $\chi(M)$, where $\nabla$ is the Levi-Civita connection of the metric $g$. In addition the following curvature property is also satisfied:
\begin{equation}\label{2.19}
R(\omega_1,\omega_2)\xi=\eta(\omega_1)(\omega_2-\phi h\omega_2)-\eta(\omega_2)(\omega_1-\phi h\omega_1)+(\nabla_{\omega_2}{\phi h})\omega_1-(\nabla_{\omega_1}{\phi h})\omega_2
\end{equation}
for any vector fields $\omega_1, \omega_2$ on $M$ and $R$ is the Riemannian curvature tensor of $(M,g)$. The $(1,1)$-type symmetric tensor field $h'=h\phi$ is anti-commuting with $\phi$ and $h'\xi=0$.
\par
In the context of an almost Kenmotsu manifold, we introduce two $(1,1)$-type tensor fields $h=\frac{1}{2}\mathcal{L}_\xi\phi$ and $h'=h\circ\phi$. Additionally, we consider an open operator $\ell=R(.,\xi)\xi$ \cite{dile}, i.e.,
\begin{equation}\label{2.20}
R(\omega_1,\omega_2)\xi=\kappa\{\eta(\omega_2)\omega_1-\eta(\omega_1)\omega_2\}+\mu\{\eta(\omega_2)h'\omega_1-\eta(\omega_1)h'\omega_2\}
\end{equation}
for any vector fields $\omega_1, \omega_2$ on $M$, where $\kappa$ and $\mu$ are real constants. On a $(\kappa,\mu)'$-almost Kenmotsu manifold $M$, we have (for details see \cite{dile})
\begin{equation} \label{2.21}
h=0\Leftrightarrow h'=0, h'^{2}=(k+1)\phi^2,
\end{equation}
\begin{equation}\label{2.22}
h^2(\omega_1)=(\kappa+1)\phi^2\omega_1
\end{equation}
for $\omega_1$ $\in$ $\chi(M)$. Based on the previous relation, we can deduce that $h'=0$ if and only if $\kappa=-1$, and $h\neq0$ otherwise. Let $\omega_1 \in \text{Ker}(\eta)$ be an eigenvector field of $h'$ orthogonal to $\xi$ with respect to the eigenvalue $\alpha$. By utilizing \eqref{2.21}, we obtain $\alpha^2=-(\kappa+1)$, which implies $\kappa\leq-1$. In \cite{dile}, both the coefficient $\mu$ in the definition of  $*-$R-B-S and the term $(\kappa,\mu)'$ in the context of an almost Kenmotsu manifold share the same symbol. To simplify the notation, we adopt the notation of $(\kappa, -2)'$-almost Kenmotsu manifold in this paper, where $\mu=-2$, as stated in Proposition 4.1 of \cite{dile}.

\par
We recall some useful results on a $(2n+1)$ dimensional $(\kappa, -2)'$-almost Kenmotsu manifold $M$ with $\kappa\leq-1$ as follows:
\begin{equation}\label{2.23}
R(\xi,\omega_1)\omega_2=\kappa\{g(\omega_1,\omega_2)\xi-\eta(\omega_2)\omega_1\}-2\{g(h'\omega_1,\omega_2)\xi-\eta(\omega_2)h'\omega_1\},
\end{equation}
\begin{equation}\label{2.24}
Q\omega_1=-2n\omega_1+2n(\kappa+1)\eta(\omega_1)\xi-2nh'(\omega_1),
\end{equation}
\begin{equation}\label{2.25}
r=2n(\kappa-2n),
\end{equation}
\begin{equation}\label{2.26}
(\nabla_{\omega_1}\eta)\omega_2=g(\omega_1,\omega_2)-\eta(\omega_1)\eta(\omega_2)+g(h'\omega_1,\omega_2),
\end{equation}
where $\omega_1$ and $\omega_2$ $\in$ $\chi(M)$, $Q$, $r$ are the Ricci operator and scalar curvature of $M$ respectively.\par
\noindent The notion $(\kappa,\mu)$-nullity distribution on a contact metric manifold $M$ was introduced by Blair et al. \cite{bla}, which is defined for any $p \in M$ and $k, \mu \in \mathbb{R}$ as follows:
\begin{eqnarray} \label{2.27}
  N_p(k,\mu) &=& \{\omega_3 \in T_p(M) : R(\omega_1,\omega_2)\omega_3=k[g(\omega_2,\omega_3)\omega_1-g(\omega_1,\omega_3)\omega_2] \nonumber \\
   &+& \mu[g(\omega_2,\omega_3)h\omega_1-g(\omega_1,\omega_3)h\omega_2]\}
\end{eqnarray}
for any vector fields $\omega_1, \omega_2$ on $T_p(M)$, where $T_p(M)$ represents the tangent space on $M$ at any point $p \in M $ and $R$ is the Riemannian tensor. In \cite{dile}, Dileo and Pastore introduced the notion of $(\kappa,\mu)'$-nullity distribution, on an almost Kenmotsu manifold $(M,\phi,\xi,\eta,g),$ which is defined for any $p \in M$ and $k, \mu \in \mathbb{R}$ as follows:
\begin{eqnarray} \label{2.28}
  N_p(k,\mu)' &=& \{\omega_3 \in T_p(M) : R(\omega_1,\omega_2)\omega_3=k[g(\omega_2,\omega_3)\omega_1-g(\omega_1,\omega_3)\omega_2] \nonumber \\
   &+& \mu[g(\omega_2,\omega_3)h'\omega_1-g(\omega_1,\omega_3)h'\omega_2]\}
\end{eqnarray}
for any vector fields $\omega_1, \omega_2$ on $T_p(M)$.

\medskip
\section{almost $*-$R-B-S on Kenmotsu manifold}
\medskip
In this section, we focus on studying the metric $g$ of a $(2n+1)$-dimensional Kenmotsu manifold that admits an almost $*-$R-B-S as well as a gradient almost $*-$R-B-S. To facilitate our analysis, we review several significant lemmas that are relevant to our investigation.
\begin{lem}
\cite{venka} The Ricci operator $Q$ on a $(2n+1)$-dimensional Kenmotsu manifold satisfies
\begin{equation}\label{3.1}
(\nabla_{\omega_1}Q)\xi=-Q\omega_1-2n\omega_1,
\end{equation}
\begin{equation}\label{3.2}
(\nabla_\xi Q)\omega_1=-2Q\omega_1-4n\omega_1
\end{equation}
for arbitrary vector field $\omega_1$ on the manifold.
\end{lem}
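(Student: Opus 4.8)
The plan is to establish \eqref{3.1} by a short Leibniz computation, and then to bootstrap \eqref{3.2} from it by differentiating the curvature identity \eqref{2.11} and contracting.

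\emph{Step 1 (the identity \eqref{3.1}).} First I would write, for an arbitrary vector field $\omega_1$,
\[
(\nabla_{\omega_1}Q)\xi=\nabla_{\omega_1}(Q\xi)-Q(\nabla_{\omega_1}\xi).
\]
Using $Q\xi=-2n\xi$ from \eqref{2.12} and $\nabla_{\omega_1}\xi=\omega_1-\eta(\omega_1)\xi$ from \eqref{2.9}, the right-hand side becomes $-2n\big(\omega_1-\eta(\omega_1)\xi\big)-Q\omega_1+\eta(\omega_1)(-2n\xi)$, and after cancelling the two $\eta(\omega_1)\xi$ terms one is left with $-Q\omega_1-2n\omega_1$, which is \eqref{3.1}.

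\emph{Step 2 (the identity \eqref{3.2}).} Next I would take the covariant derivative of $R(\omega_1,\omega_2)\xi=\eta(\omega_1)\omega_2-\eta(\omega_2)\omega_1$ (equation \eqref{2.11}) in an arbitrary direction $\omega_3$. Expanding both sides by the Leibniz rule and substituting $\nabla_{\omega_3}\xi=\omega_3-\eta(\omega_3)\xi$ from \eqref{2.9} together with $(\nabla_{\omega_3}\eta)\omega_i=g(\omega_3,\omega_i)-\eta(\omega_3)\eta(\omega_i)$ from \eqref{2.10}, all the terms containing $\nabla_{\omega_3}\omega_1$, $\nabla_{\omega_3}\omega_2$ and their $\eta$-contractions should cancel in pairs, leaving the pointwise identity
\[
(\nabla_{\omega_3}R)(\omega_1,\omega_2)\xi=g(\omega_3,\omega_1)\omega_2-g(\omega_3,\omega_2)\omega_1-R(\omega_1,\omega_2)\omega_3 .
\]
I would then contract this with an orthonormal frame $\{e_i\}$: putting $\omega_3=\omega_2=e_i$, summing over $i$, and pairing with a test field $\omega_4$, the right-hand side collapses (using $\sum_i g(R(\omega_1,e_i)e_i,\omega_4)=S(\omega_1,\omega_4)$ and $\sum_i g(e_i,e_i)=2n+1$, where $S(X,Y)=g(QX,Y)$) to $-S(\omega_1,\omega_4)-2n\,g(\omega_1,\omega_4)$, while the left-hand side, after invoking the pair-symmetry of the curvature tensor and the once-contracted second Bianchi identity $\sum_i(\nabla_{e_i}R)(X,Y,Z,e_i)=(\nabla_X S)(Y,Z)-(\nabla_Y S)(X,Z)$, becomes $(\nabla_\xi S)(\omega_4,\omega_1)-(\nabla_{\omega_4}S)(\xi,\omega_1)$. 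Since \eqref{3.1} in its $(0,2)$-form reads $(\nabla_{\omega_4}S)(\xi,\omega_1)=-S(\omega_4,\omega_1)-2n\,g(\omega_4,\omega_1)$, substituting this and solving for $(\nabla_\xi S)(\omega_4,\omega_1)$ yields $(\nabla_\xi S)(\omega_4,\omega_1)=-2S(\omega_4,\omega_1)-4n\,g(\omega_4,\omega_1)$, which is exactly \eqref{3.2} written metrically.

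Step 1 is entirely routine. The delicate point is the contraction in Step 2: one has to fix the curvature-tensor symmetry conventions and the sign in the once-contracted Bianchi identity so that they are consistent with the normalization $Q\xi=-2n\xi$, and to check that the various frame sums do collapse as claimed. A possible alternative for \eqref{3.2} that avoids the Bianchi identity would be to compute $(\mathcal{L}_\xi S)(\omega_1,\omega_2)$ using \eqref{2.9} and \eqref{2.13} and then convert it to $(\nabla_\xi S)(\omega_1,\omega_2)$; however this still requires an independent evaluation of $\mathcal{L}_\xi S$, so the route through the differentiated curvature identity seems the most economical.
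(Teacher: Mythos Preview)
Your proof is correct in both steps. The Leibniz computation in Step~1 is clean, and in Step~2 the differentiated curvature identity $(\nabla_{\omega_3}R)(\omega_1,\omega_2)\xi=g(\omega_3,\omega_1)\omega_2-g(\omega_3,\omega_2)\omega_1-R(\omega_1,\omega_2)\omega_3$ does hold on a Kenmotsu manifold, and your contraction via the once-contracted second Bianchi identity goes through with the standard sign conventions (so your caveat about fixing conventions is prudent but not an actual obstruction).

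There is nothing to compare against in the paper itself: this lemma is quoted from \cite{venka} without proof, so the paper offers no argument of its own. Your write-up therefore supplies what the paper omits. If anything, one could remark that the route you flag as an alternative---computing $(\nabla_\xi Q)\omega_1$ directly from the well-known Kenmotsu identity $(\nabla_{\omega_3}R)(\omega_1,\omega_2)\xi$ above by setting $\omega_3=\xi$ and tracing over $\omega_2$---is slightly shorter than invoking Bianchi, since it only needs the definition of Ricci as a trace; but both approaches are standard and equally valid.
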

\begin{lem}
\cite{venka} The $*$-Ricci tensor $\mathcal{S}^*$ on a $(2n+1)$-dimensional Kenmotsu manifold had the relation as follows
\begin{equation} \label{3.3}
\mathcal{S}^*(\omega_1,\omega_2)=\mathcal{S}(\omega_1,\omega_2)+(2n-1)g(\omega_1,\omega_2)+\eta(\omega_1)\eta(\omega_2)
\end{equation}
for arbitrary vector fields $\omega_1$ and $\omega_2$ $\in$ $\chi(M)$ and the corresponding $*$-scalar curvature is given by the expression $r^*=r+4n^2$.
\end{lem}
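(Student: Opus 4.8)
The plan is to evaluate the $*$-Ricci tensor straight from its definition \eqref{1.1}, using a curvature identity peculiar to Kenmotsu manifolds to rewrite the contraction appearing there in terms of the ordinary Ricci tensor. I would choose a local orthonormal frame $\{e_i\}_{i=1}^{2n+1}$ with $e_{2n+1}=\xi$; then by \eqref{2.3} (and $\eta(e_i)=0$ for $i\le 2n$) the vectors $\{\phi e_i\}_{i=1}^{2n}$ form an orthonormal basis of the contact distribution $\ker\eta$, and \eqref{1.1} reads
\begin{equation*}
\mathcal{S}^*(\omega_1,\omega_2)=\tfrac12\sum_{i}g\big(R(\omega_1,\phi\omega_2)\phi e_i,e_i\big).
\end{equation*}

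The first step is to prove that on any Kenmotsu manifold
\begin{equation*}
R(X,Y)\phi Z=\phi R(X,Y)Z+g(\phi Y,Z)X-g(\phi X,Z)Y+g(Y,Z)\phi X-g(X,Z)\phi Y .
\end{equation*}
I would obtain this from the Ricci commutation identity for the $(1,1)$-tensor $\phi$, namely $R(X,Y)\phi Z-\phi R(X,Y)Z=(\nabla^2_{X,Y}\phi)Z-(\nabla^2_{Y,X}\phi)Z$, by differentiating the Kenmotsu relation \eqref{2.8} a second time and simplifying the derivatives of $\xi$ and $\eta$ with \eqref{2.9} and \eqref{2.10}; upon antisymmetrizing in $X$ and $Y$ the $\xi$-valued terms and several others cancel, leaving the displayed formula. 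Setting $Z=\xi$ or $X=\xi$ and invoking \eqref{2.11} provides a quick consistency check.

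Substituting this identity with $Z=e_i$ into the formula for $\mathcal{S}^*$ and contracting against $e_i$, the two pairs of ``extra'' terms cancel after summing over $i$, because $\sum_i g(V,e_i)g(W,e_i)=g(V,W)$ and $g(\phi e_i,e_i)=0$; hence $\mathcal{S}^*(\omega_1,\omega_2)=\tfrac12\sum_i g(\phi R(\omega_1,\phi\omega_2)e_i,e_i)=-\tfrac12\sum_i g(R(\omega_1,\phi\omega_2)e_i,\phi e_i)$. To handle the remaining sum I would apply the first Bianchi identity $R(\omega_1,\phi\omega_2)e_i=-R(\phi\omega_2,e_i)\omega_1-R(e_i,\omega_1)\phi\omega_2$, rewrite the first piece by the pair symmetry of $R$ as a contraction of $R(\omega_1,\phi e_i)\phi\omega_2$, and then use the curvature identity a second time on $R(\omega_1,\phi e_i)\phi\omega_2$ and on $R(e_i,\omega_1)\phi\omega_2$. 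Contracting and summing over $i$, and using \eqref{2.1}, \eqref{2.3}, the completeness relations, the standard identity $\mathcal{S}(X,Y)=\sum_i g(R(e_i,X)Y,e_i)$, and $R(\xi,X)Y=\eta(Y)X-g(X,Y)\xi$ (which follows from \eqref{2.11}), each of the two resulting sums collapses to $\mathcal{S}(\omega_1,\omega_2)+(2n-1)g(\omega_1,\omega_2)+\eta(\omega_1)\eta(\omega_2)$; combining these with the prefactors gives \eqref{3.3}. Finally, taking the $g$-trace of \eqref{3.3} over the frame, with $\sum_i\eta(e_i)^2=1$, yields $r^*=r+(2n-1)(2n+1)+1=r+4n^2$.

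I expect the main obstacle to be this last reduction of $\sum_i g(R(\omega_1,\phi\omega_2)e_i,\phi e_i)$ to the Ricci tensor: it requires the Bianchi identity together with a second application of the Kenmotsu curvature identity, and care is needed because the vectors $\{\phi e_i\}$ span only the contact distribution and miss the $\xi$-direction, which is precisely what produces the term $g(R(\omega_1,\xi)\omega_2,\xi)=g(\omega_1,\omega_2)-\eta(\omega_1)\eta(\omega_2)$ responsible for the $(2n-1)g+\eta\otimes\eta$ part of \eqref{3.3}. Deriving the curvature identity in the first step is also somewhat computational, though entirely routine.
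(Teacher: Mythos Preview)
The paper does not give its own proof of this lemma; it simply cites the result from \cite{venka}, so there is no argument in the paper to compare your proposal against.

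That said, your outlined approach is correct and is essentially the standard one. The Kenmotsu curvature identity
\[
R(X,Y)\phi Z-\phi R(X,Y)Z=g(\phi Y,Z)X-g(\phi X,Z)Y+g(Y,Z)\phi X-g(X,Z)\phi Y
\]
is indeed a direct consequence of differentiating \eqref{2.8} and using \eqref{2.9}, \eqref{2.10}; your consistency checks at $Z=\xi$ and $X=\xi$ (together with \eqref{2.11}) confirm the signs. The cancellation of the four extra terms in the first contraction is exactly as you describe, and the Bianchi step together with a second application of the identity does reduce $\sum_i g(R(\omega_1,\phi\omega_2)e_i,\phi e_i)$ to the Ricci tensor plus the lower-order pieces; you have correctly identified that the defect of the frame $\{\phi e_i\}$ in the $\xi$-direction is precisely what generates the $(2n-1)g+\eta\otimes\eta$ correction via $g(R(\omega_1,\xi)\omega_2,\xi)$. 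The trace computation for $r^*$ is then immediate. The only caveat is that the final Bianchi reduction, while routine, is genuinely a multi-line computation rather than a one-liner, so in a written proof you would want to display at least one intermediate step there.
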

\begin{lem}
Let a metric $g$ of a Kenmotsu manifold $M^{2n+1}(\phi,\xi,\eta,g)$ admits an almost $*-$R-B-S, then we have the following
\begin{center}
( $\mathcal{L}_VR)(\omega_1,\xi)=2(\lambda+\rho(r+4n^2))\{\eta(\omega_1)\xi-\omega_1\}$.
 \end{center}
\end{lem}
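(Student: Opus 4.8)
The plan is to convert the soliton equation \eqref{1.2} into an expression for the deformation tensor $\mathcal{L}_V\nabla$ of the Levi--Civita connection, then differentiate once more and evaluate on the Reeb field.

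First I would rewrite \eqref{1.2} in the form $(\mathcal{L}_Vg)(\omega_1,\omega_2)=2(\lambda+\rho r^*)g(\omega_1,\omega_2)-2\mathcal{S}^*(\omega_1,\omega_2)$ and substitute \eqref{3.3} together with $r^*=r+4n^2$, so that the right-hand side is expressed through $\mathcal{S}$, $g$ and $\eta\otimes\eta$ only. Putting $\omega_2=\xi$ and using $Q\xi=-2n\xi$ from \eqref{2.12} one finds $\mathcal{S}^*(\omega_1,\xi)=-2n\eta(\omega_1)+(2n-1)\eta(\omega_1)+\eta(\omega_1)=0$, hence $(\mathcal{L}_Vg)(\omega_1,\xi)=2\big(\lambda+\rho(r+4n^2)\big)\eta(\omega_1)$; this normalization is what will ultimately produce the scalar factor in the statement.

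Next I would use the classical identity
\[
2g\big((\mathcal{L}_V\nabla)(\omega_1,\omega_2),\omega_3\big)=(\nabla_{\omega_1}\mathcal{L}_Vg)(\omega_2,\omega_3)+(\nabla_{\omega_2}\mathcal{L}_Vg)(\omega_1,\omega_3)-(\nabla_{\omega_3}\mathcal{L}_Vg)(\omega_1,\omega_2)
\]
and compute $\nabla(\mathcal{L}_Vg)$ from the previous step, using $\nabla g=0$, the formula \eqref{2.10} for $\nabla\eta$, the Kenmotsu relation \eqref{2.9}, and the derivatives \eqref{3.1} and \eqref{3.2} of the Ricci operator; the differential of the (non-constant) function $\lambda+\rho r^*$ enters only through the $g$-term. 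Specializing $\omega_2=\xi$ and again invoking $\mathcal{S}^*(\cdot,\xi)=0$ and \eqref{2.12} collapses the $*$-Ricci contributions and yields closed formulas for $(\mathcal{L}_V\nabla)(\omega_1,\xi)$ and $(\mathcal{L}_V\nabla)(\xi,\xi)$ in terms of $Q\omega_1$, $\omega_1$, $\xi$, $\eta(\omega_1)$ and the gradient of $\lambda+\rho r^*$.

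Finally I would invoke the commutation formula $(\mathcal{L}_VR)(\omega_1,\omega_2)\omega_3=(\nabla_{\omega_1}\mathcal{L}_V\nabla)(\omega_2,\omega_3)-(\nabla_{\omega_2}\mathcal{L}_V\nabla)(\omega_1,\omega_3)$, set $\omega_2=\omega_3=\xi$, and expand the two covariant derivatives of the $(1,2)$-tensor $\mathcal{L}_V\nabla$ using $\nabla_{\omega_1}\xi=\omega_1-\eta(\omega_1)\xi$, $\nabla_\xi\xi=0$ and once more \eqref{3.1} and \eqref{3.2}. The $Q\omega_1$-terms produced by $\nabla_{\omega_1}(\mathcal{L}_V\nabla)(\xi,\xi)$ and by $\nabla_\xi(\mathcal{L}_V\nabla)(\omega_1,\xi)$ cancel in pairs, and the surviving terms regroup as a multiple of $\eta(\omega_1)\xi-\omega_1$, which by \eqref{2.11} is $R(\omega_1,\xi)\xi$; tracking the coefficient gives $2\big(\lambda+\rho(r+4n^2)\big)$, as asserted. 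The main obstacle is exactly this closing computation: differentiating a $(1,2)$-tensor twice and evaluating it when two slots carry the non-parallel field $\xi$, while keeping track of the $d\lambda$ and $dr$ contributions that appear because the soliton is only almost; the Kenmotsu identities $Q\xi=-2n\xi$, \eqref{2.9}, \eqref{2.10} together with \eqref{3.1} and \eqref{3.2} are what force all of those terms to collapse onto $\{\eta(\omega_1)\xi-\omega_1\}$.
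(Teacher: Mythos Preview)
Your route through the connection deformation $\mathcal{L}_V\nabla$ and the commutation formula $(\mathcal{L}_VR)(\omega_1,\omega_2)\omega_3=(\nabla_{\omega_1}\mathcal{L}_V\nabla)(\omega_2,\omega_3)-(\nabla_{\omega_2}\mathcal{L}_V\nabla)(\omega_1,\omega_3)$ is a legitimate way to reach $(\mathcal{L}_VR)(\omega_1,\xi)\xi$, but it is not the argument the paper gives for this lemma. The paper's proof is far shorter: it Lie-differentiates the Kenmotsu identity $R(\omega_1,\xi)\xi=\eta(\omega_1)\xi-\omega_1$ (a consequence of \eqref{2.11}) directly along $V$, obtaining an expression for $(\mathcal{L}_VR)(\omega_1,\xi)\xi$ in terms of $R(\omega_1,\xi)\mathcal{L}_V\xi$, $(\mathcal{L}_V\eta)(\omega_1)$, $\eta(\mathcal{L}_V\xi)$ and $g(\omega_1,\mathcal{L}_V\xi)$; each of these is then evaluated using only the $\xi$-contraction \eqref{3.6} of the soliton equation together with the Lie derivative of $\eta(\xi)=1$ and $\eta(\omega_1)=g(\omega_1,\xi)$. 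No covariant derivatives of $\mathcal{S}$ or $Q$, no $\mathcal{L}_V\nabla$, and no tracking of $d\lambda$, $dr$ through a second differentiation are required.

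Your approach is essentially the machinery the paper deploys \emph{afterwards}, in Proposition~3.4 and in the derivation of \eqref{3.14}--\eqref{3.17} inside Theorem~3.5, to produce a \emph{second}, Ricci-operator-laden expression for the same quantity; comparing that expression with the present lemma is precisely what yields the $\eta$-Einstein conclusion. So the two routes are complementary rather than interchangeable: the lemma uses only the Kenmotsu curvature identity plus the $\xi$-trace of the soliton equation, while your route feeds in the full soliton equation through $\mathcal{L}_V\nabla$ and therefore carries the $Q\omega_1$ and gradient contributions that the paper deliberately keeps separate. The ``closing computation'' you flag as the main obstacle is real and delicate on your path, whereas the paper's direct Lie differentiation sidesteps it entirely.
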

\begin{proof}
If a metric $g$ of a Kenmotsu manifold $M^{2n+1}$ admits almost $*$-R-B soliton, then by $*$-Ricci tensor expression of \eqref{3.3}, an almost $*$-R-B soliton \eqref{1.2} becomes
\begin{equation}\label{3.4}
(\mathcal{L}_Vg)(\omega_1,\omega_2)+2\mathcal{S}(\omega_1,\omega_2)=2\{(\lambda+\rho r^*)-(2n-1)\}g(\omega_1,\omega_2)-2\eta(\omega_1)\eta(\omega_2)
\end{equation}
for all $\omega_1, \omega_2$ $\in$ $\chi(M)$. Taking the Lie derivative of the expression $R(\omega_1,\xi)\xi=\eta(\omega_1)\xi-\omega_1$ (follows from \eqref{2.11}) along the vector field $V$ and using \eqref{2.11}, we get
\begin{equation}\label{3.5}
(\mathcal{L}_VR)(\omega_1,\xi)\xi+R(\omega_1,\xi)\mathcal{L}_V\xi+\eta(\mathcal{L}_V\xi)\omega_1+(\mathcal{L}_Vg)(\omega_1,\xi)\xi+g(\omega_1,\mathcal{L}_V\xi)\xi=0
\end{equation}
for all $\omega_1 \in \chi(M)$. Using $Q\xi=-2n\xi$ follows from \eqref{2.12} and \eqref{3.4}, we have
\begin{equation}\label{3.6}
(\mathcal{L}_Vg)(\omega_1,\xi)=2(\lambda+\rho r^*)\eta(\omega_1)
\end{equation}
Now taking the Lie-derivative of $\eta(\omega_1)=g(\omega_1,\xi)$ and $\eta(\xi)=1$  we get $(\mathcal{L}_V\eta)(\omega_1)=(\mathcal{L}_Vg)(\omega_1,\xi)$ and $(\mathcal{L}_V\eta)(\xi)+\eta(\mathcal{L}_V\xi)=0$ respectively then using \eqref{3.6}, we can compute $(\mathcal{L}_V\eta)(\xi)=\lambda+\rho r^*$ and $\eta(\mathcal{L}_V\xi)=-\lambda-\rho r^*$. Thus by the virtue of \eqref{2.11}, lemma $3.2$ and equation \eqref{3.5}, we get
\begin{center}
$(\mathcal{L}_VR)(\omega_1,\xi)=2(\lambda+\rho(r+4n^2))\{\eta(\omega_1)\xi-\omega_1\}$
\end{center}
The proof is now complete.
\end{proof}
\begin{prop}
Let a metric $g$ of a Kenmotsu manifold $M^{2n+1}(\phi,\xi,\eta,g)$ satisfies an almost $*$-R-B soliton, then the following results hold
\begin{center}
$(\mathcal{L}_V\nabla)(\omega_1,\xi)=2(2n-1)\phi \omega_1-2\phi Q\omega_1+\omega_1(\lambda+\rho(r+4n^2))\xi+\{\xi(\lambda+\rho(r+4n^2))-2\}\omega_1+\eta(\omega_1)\{\nabla(\lambda+\rho(r+4n^2))-2\xi\}$.
\end{center}
\end{prop}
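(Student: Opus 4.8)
The plan is to compute the Lie derivative $(\mathcal{L}_V\nabla)(\omega_1,\xi)$ using the well-known formula relating $\mathcal{L}_V\nabla$ to covariant derivatives of $\mathcal{L}_Vg$. Recall that for any metric connection,
\[
g((\mathcal{L}_V\nabla)(\omega_1,\omega_2),\omega_3)=\tfrac{1}{2}\left\{(\nabla_{\omega_1}\mathcal{L}_Vg)(\omega_2,\omega_3)+(\nabla_{\omega_2}\mathcal{L}_Vg)(\omega_1,\omega_3)-(\nabla_{\omega_3}\mathcal{L}_Vg)(\omega_1,\omega_2)\right\}.
\]
I would specialize $\omega_2=\xi$. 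The key input is equation \eqref{3.4}, which expresses $\mathcal{L}_Vg$ in terms of the Ricci tensor $\mathcal{S}$, the metric, and $\eta\otimes\eta$; explicitly $(\mathcal{L}_Vg)(\omega_1,\omega_2)=2\{(\lambda+\rho r^*)-(2n-1)\}g(\omega_1,\omega_2)-2\mathcal{S}(\omega_1,\omega_2)-2\eta(\omega_1)\eta(\omega_2)$. So I would differentiate this identity covariantly, using \eqref{2.10} for $\nabla\eta$, \eqref{2.9}/\eqref{2.16} for $\nabla\xi$, the Kenmotsu identities \eqref{3.1}--\eqref{3.2} from Lemma 3.1 for the covariant derivatives of $Q$ in the $\xi$ direction, and $\mathcal{S}(\omega_1,\omega_2)=g(Q\omega_1,\omega_2)$ to convert back and forth. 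Note $r^*=r+4n^2$ by Lemma 3.2, so the scalar-function factor is $\lambda+\rho(r+4n^2)$ throughout.

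The concrete steps, in order: (1) write $A:=\lambda+\rho(r+4n^2)$ and record $(\mathcal{L}_Vg)(\omega_1,\omega_2)=2(A-(2n-1))g(\omega_1,\omega_2)-2\mathcal{S}(\omega_1,\omega_2)-2\eta(\omega_1)\eta(\omega_2)$; (2) compute $(\nabla_{\omega_1}\mathcal{L}_Vg)(\xi,\omega_3)$, $(\nabla_\xi\mathcal{L}_Vg)(\omega_1,\omega_3)$ and $(\nabla_{\omega_3}\mathcal{L}_Vg)(\omega_1,\xi)$ separately — each produces a term with $\omega_1(A)$, $\xi(A)$, or $\omega_3(A)$ from differentiating the scalar factor, plus terms from $(\nabla Q)$ and from differentiating the $g$ and $\eta\otimes\eta$ pieces; (3) for the $Q$-terms invoke $(\nabla_{\omega_1}Q)\xi=-Q\omega_1-2n\omega_1$ and $(\nabla_\xi Q)\omega_1=-2Q\omega_1-4n\omega_1$, and for the $\eta$-terms use $(\nabla_{\omega_1}\eta)\xi=0$ together with $(\nabla_{\omega_1}\eta)\omega_2=g(\omega_1,\omega_2)-\eta(\omega_1)\eta(\omega_2)$; (4) assemble the three pieces with signs $+,+,-$, divide by $2$, and identify the $\phi Q$ and $\phi\,\omega_1$ contributions — these arise because $\nabla_{\omega_1}\xi=\omega_1-\eta(\omega_1)\xi$ while applying $Q$ and using $\mathcal{S}(\nabla_{\omega_1}\xi,\omega_3)$ combines with the $(2n-1)$-shift and curvature symmetries to generate $\phi$; (5) raise the index $\omega_3$ back up using $g$ to get the stated vector-valued identity, collecting the $\xi$-component (coefficient $\omega_1(A)-2$ becomes $\omega_1(A)$ on $\xi$ after bookkeeping — I'd track this carefully), the $\omega_1$-component with coefficient $\xi(A)-2$, and the $\eta(\omega_1)$-component with $\nabla A - 2\xi$.

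The main obstacle I anticipate is the careful bookkeeping in step (4)–(5): keeping track of where the $-2$ constants land (on $\xi$ versus on $\omega_1$ versus absorbed into the gradient term), and correctly producing the $2(2n-1)\phi\omega_1 - 2\phi Q\omega_1$ combination, which requires using the skew-symmetry relations and the fact that $\nabla_{\omega_1}\xi$ has no $\phi$ in it on a Kenmotsu manifold — so the $\phi$ must emerge from contracting $\mathcal{S}$ or $Q$ against $\nabla\xi$-type terms via the curvature identity \eqref{2.11} hidden inside $\nabla Q$. I would double-check the final expression by contracting with $\xi$ (i.e. setting $\omega_1=\xi$) and with $\omega_1\perp\xi$ to confirm consistency with Lemma 3.3 and with \eqref{3.6}. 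One should also verify the $\rho\,\nabla r$ contribution is correctly packaged inside $\nabla A = D\lambda + \rho\,Dr$, matching the notation in the Section 4 theorem statement.
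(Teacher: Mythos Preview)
Your overall strategy matches the paper's: both use the Yano-type identity expressing $g((\mathcal{L}_V\nabla)(\omega_1,\omega_2),\omega_3)$ as a signed sum of covariant derivatives of $\mathcal{L}_Vg$, substitute the differentiated soliton relation \eqref{3.4}, cyclically permute to reach the general formula (the paper's \eqref{3.8}), and then specialise $\omega_2=\xi$.

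The gap is in the $Q$-identities you feed in at the last step. You plan to invoke Lemma~3.1, namely $(\nabla_{\omega_1}Q)\xi=-Q\omega_1-2n\omega_1$ and $(\nabla_\xi Q)\omega_1=-2Q\omega_1-4n\omega_1$. But neither of these, nor $\nabla_{\omega_1}\xi=\omega_1-\eta(\omega_1)\xi$, nor $(\nabla_{\omega_1}\eta)\omega_2$, nor \eqref{2.11}, contains $\phi$. If you carry your plan through, the three Ricci terms in \eqref{3.8} at $\omega_2=\xi$ collapse (using symmetry of $Q$) to $2\mathcal{S}(\omega_1,\omega_3)+4ng(\omega_1,\omega_3)$, and you obtain $(\mathcal{L}_V\nabla)(\omega_1,\xi)=2Q\omega_1+4n\omega_1+(\text{gradient and }\eta\text{ terms})$ --- which is exactly equation \eqref{3.14} appearing later in the proof of Theorem~3.5, \emph{not} the Proposition's formula. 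Your step-(4) mechanism (``$\phi$ emerges from contracting $\mathcal{S}$ or $Q$ against $\nabla\xi$-type terms'') cannot work, because none of those objects involves $\phi$ on a Kenmotsu manifold. The paper's proof instead derives and applies two different identities at this point: \eqref{3.9}, $(\nabla_{\omega_1}Q)\xi=Q\phi\omega_1-2n\phi\omega_1$, obtained by differentiating $Q\xi=-2n\xi$ using the parenthetical form of $\nabla_{\omega_1}\xi$ in \eqref{2.9}, and \eqref{3.10}, $\nabla_\xi Q=Q\phi-\phi Q$. These are precisely what supply the $2(2n-1)\phi\omega_1-2\phi Q\omega_1$ combination when substituted into \eqref{3.8}. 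To recover the Proposition as stated you must replace your Lemma~3.1 inputs by \eqref{3.9}--\eqref{3.10}; your instinct that the origin of the $\phi$-terms is the crux was correct, but the source is these alternative $\nabla Q$ formulas, not anything hidden in \eqref{2.11} or in $\nabla\xi$.
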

\begin{proof}
First, we take the covariant differentiation of \eqref{3.4} along an arbitrary vector field $\omega_3 \in \chi(M)$ and using \eqref{2.9} and \eqref{2.10} to yield
\begin{eqnarray}\label{3.7}
  &&(\nabla_{\omega_3}\mathcal{L}_Vg)(\omega_1,\omega_2)+2(\nabla_{\omega_3}\mathcal{S})(\omega_1,\omega_2)=2\omega_3(\lambda+\rho r^*)g(\omega_1,\omega_2) \nonumber \\
   & & -2\{g(\omega_1,\omega_3)\eta(\omega_2)+g(\omega_2,\omega_3)\eta(\omega_1)-2\eta(\omega_1)\eta(\omega_2)\eta(\omega_3)\}
\end{eqnarray}
for all $\omega_1, \omega_2, \omega_3 \in \chi(M)$.
\noindent Now, we look back on the following formula (see in \cite{yano})
\begin{center}
$(\mathcal{L}_V{\omega_3}g-\nabla_{\omega_3}\mathcal{L}_Vg-\nabla_{[V,\omega_3]})(\omega_1,\omega_2)=-g((\mathcal{L}_V\nabla)(\omega_3,\omega_1),\omega_2)-g((\mathcal{L}_V\nabla)(\omega_3,\omega_2),\omega_1)$
\end{center}
for all $\omega_1, \omega_2, \omega_3 \in \chi(M)$. Since the Riemannian metric $g$ is parallel, substituting equation \eqref{3.7} into the above formula results in the following expression
\begin{eqnarray}
&&g((\mathcal{L}_V\nabla)(\omega_3,\omega_1),\omega_2)+g((\mathcal{L}_V\nabla)(\omega_3,\omega_2),\omega_1)+2(\nabla_{\omega_3}\mathcal{S})(\omega_1,\omega_2) \nonumber \\
&&=2\{\omega_3(\lambda+\rho r^*)g(\omega_1,\omega_2)-g(\omega_1,\omega_3)\eta(\omega_2)-g(\omega_2,\omega_3)\eta(\omega_1) \nonumber \\
&&+2\eta(\omega_1)\eta(\omega_2)\eta(\omega_3)\}.  \nonumber
\end{eqnarray}
In out look of symmetry $(\mathcal{L}_V\nabla)(\omega_1,\omega_2)=(\mathcal{L}_V\nabla)(\omega_2,\omega_1)$ of the $(1,2)$-type tensor field $\mathcal{L}_V\nabla$, interchanging cyclically the roles of $\omega_1, \omega_2, \omega_3$ in the forgoing equations, we attain
\begin{eqnarray}\label{3.8}
 && g((\mathcal{L}_V\nabla)(\omega_1,\omega_2),\omega_3)=(\nabla_{\omega_3}\mathcal{S})(\omega_1,\omega_2)-(\nabla_{\omega_1}\mathcal{S})(\omega_2,\omega_3)-(\nabla_{\omega_2}\mathcal{S})(\omega_3,\omega_1) \nonumber \\
 && +\omega_1(\lambda+\rho r^*)g(\omega_2,\omega_3)+\omega_2(\lambda+\rho r^*)g(\omega_3,\omega_1)-\omega_3(\lambda+\rho r^*)g(\omega_1,\omega_2)  \nonumber \\
 && -2\{\eta(\omega_1)g(\omega_2,\omega_3)+\eta(\omega_2)g(\omega_1,\omega_3)\}+2\eta(\omega_1)\eta(\omega_2)\eta(\omega_3).
\end{eqnarray}
\noindent Now, taking the covariant derivative of \eqref{2.12} along the vector field $\omega_1 \in \chi(M)$ and also using \eqref{2.9} we find
\begin{equation}\label{3.9}
(\nabla_{\omega_1}Q)\xi=Q\phi \omega_1-2n\phi \omega_1.
\end{equation}
Next, it is worth noting that on a Kenmotsu manifold, the Ricci operator and the contact metric structure $\phi$ commute \cite{bla}. So, we have
\begin{equation}\label{3.10}
  \nabla_\xi Q=Q\phi-\phi Q.
\end{equation}
Lastly, We insert $\omega_2=\xi$ into \eqref{3.8} and using the identities \eqref{3.9} and \eqref{3.10} and also applying the symmetry of $Q$ and using lemma $3.2$ and $3.3$ to yield
\begin{center}
$(\mathcal{L}_V\nabla)(\omega_1,\xi)=2(2n-1)\phi \omega_1-2\phi Q\omega_1+\omega_1(\lambda+\rho(r+4n^2))\xi+\{\xi(\lambda+\rho(r+4n^2))-2\}\omega_1+\eta(\omega_1)\{\nabla(\lambda+\rho(r+4n^2))-2\xi\}$.
\end{center}
The proof is now complete.
\end{proof}
\begin{thm}
Let $M^{2n+1}(\phi,\xi,\eta,g)$ be a Kenmotsu manifold, if the metric $g$ represents an almost $*-$R-B-S, then the manifold is $\eta$-Einstein.
\end{thm}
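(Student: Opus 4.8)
The plan is to exploit Proposition 3.5, which gives an explicit formula for $(\mathcal{L}_V\nabla)(\omega_1,\xi)$ on a Kenmotsu manifold admitting an almost $*-$R-B-S. The key idea is to differentiate this expression covariantly and then use the general commutation identity relating $\mathcal{L}_V\nabla$ to the Lie derivative of the curvature tensor, namely
\begin{equation*}
(\mathcal{L}_VR)(\omega_1,\omega_2)\omega_3=(\nabla_{\omega_1}\mathcal{L}_V\nabla)(\omega_2,\omega_3)-(\nabla_{\omega_2}\mathcal{L}_V\nabla)(\omega_1,\omega_3).
\end{equation*}
First I would set $\omega_2=\xi$ in this identity and compute $(\nabla_{\omega_1}\mathcal{L}_V\nabla)(\xi,\xi)$ and $(\nabla_\xi \mathcal{L}_V\nabla)(\omega_1,\xi)$ by covariantly differentiating the formula in Proposition 3.5, using the Kenmotsu relations \eqref{2.9}, \eqref{2.10}, \eqref{2.12}, Lemma 3.1 for $(\nabla_{\omega_1}Q)\xi$ and $(\nabla_\xi Q)\omega_1$, and the commutation $\nabla_\xi Q=Q\phi-\phi Q$ from \eqref{3.10}. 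This produces an expression for $(\mathcal{L}_VR)(\omega_1,\xi)\xi$ purely in terms of $Q\omega_1$, $\phi$, $\eta$, $\xi$ and derivatives of $\lambda+\rho r^*$.

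Next I would compute the same quantity $(\mathcal{L}_VR)(\omega_1,\xi)\xi$ independently from Lemma 3.4: since $(\mathcal{L}_VR)(\omega_1,\xi)=2(\lambda+\rho(r+4n^2))\{\eta(\omega_1)\xi-\omega_1\}$, applying this to $\xi$ (and accounting for the fact that in Lemma 3.4 the curvature has already been evaluated, so one must instead take $(\mathcal{L}_VR)(\omega_1,\xi)\xi$ directly, or re-derive it by Lie-differentiating \eqref{2.11}) yields a second, much simpler expression for the same tensor. Equating the two expressions and then taking the inner product with an arbitrary $\omega_2$, or contracting suitably, should force a relation of the form $\mathcal{S}(\omega_1,\omega_2)=a\,g(\omega_1,\omega_2)+b\,\eta(\omega_1)\eta(\omega_2)$ after the $\phi Q\omega_1$ and $\phi \omega_1$ terms are handled — typically by replacing $\omega_1$ with $\phi\omega_1$ and using $\phi^2=-I+\eta\otimes\xi$ together with the symmetry of $Q$ and the fact that $Q\xi=-2n\xi$. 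The scalar functions $a$ and $b$ will come out in terms of $\lambda+\rho r^*$ and possibly $\xi(\lambda+\rho r^*)$, which is exactly the $\eta$-Einstein condition \eqref{2.14}.

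The main obstacle I anticipate is the bookkeeping in differentiating the Proposition 3.5 formula: the term $-2\phi Q\omega_1$ contributes $(\nabla_\xi \phi)Q\omega_1$, $\phi(\nabla_\xi Q)\omega_1$ and similar pieces, and one must carefully use $(\nabla_{\omega_1}\phi)\omega_2=g(\phi\omega_1,\omega_2)\xi-\eta(\omega_2)\phi\omega_1$ from \eqref{2.8} (note the statement in \eqref{2.8} is missing the $\xi$ in the first term, so one should use the correct Kenmotsu identity) to keep track of all the $\eta$- and $\xi$-valued terms. A secondary subtlety is ensuring consistency of the gradient/derivative terms $\omega_1(\lambda+\rho r^*)$, $\xi(\lambda+\rho r^*)$ and $\nabla(\lambda+\rho r^*)$ across the two computations; these should cancel or combine cleanly once the curvature identity is imposed, leaving only the curvature/Ricci terms. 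Once the dust settles, isolating $\mathcal{S}(\omega_1,\omega_2)$ and observing that the coefficient of $g$ and of $\eta\otimes\eta$ are (pointwise) scalar functions completes the proof that $M$ is $\eta$-Einstein.
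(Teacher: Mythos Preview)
Your overall strategy coincides with the paper's: compute $(\mathcal{L}_VR)(\omega_1,\xi)\xi$ in two independent ways---once via the commutation identity $(\mathcal{L}_VR)(\omega_1,\omega_2)\omega_3=(\nabla_{\omega_1}\mathcal{L}_V\nabla)(\omega_2,\omega_3)-(\nabla_{\omega_2}\mathcal{L}_V\nabla)(\omega_1,\omega_3)$, once by Lie-differentiating $R(\omega_1,\xi)\xi=\eta(\omega_1)\xi-\omega_1$---and then equate to extract the Ricci tensor.

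Where you diverge from the paper is in the starting formula for $(\mathcal{L}_V\nabla)(\omega_1,\xi)$. You propose to work from the Proposition's expression involving $2(2n-1)\phi\omega_1-2\phi Q\omega_1$ and the gradient terms of $\lambda+\rho r^*$, and you correctly anticipate heavy bookkeeping with $(\nabla_{\omega}\phi)$, $\phi(\nabla_\xi Q)$, and cancellation of the derivative terms. The paper sidesteps all of this: in the proof of the theorem it re-derives, directly from \eqref{3.13} together with Lemma~3.1 (identities \eqref{3.1}--\eqref{3.2}), the far simpler formula
\[
(\mathcal{L}_V\nabla)(\omega_1,\xi)=2Q\omega_1+4n\omega_1,
\]
with no $\phi$-terms and no derivatives of $\lambda+\rho r^*$ at all. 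Differentiating this and applying the commutation identity yields $(\mathcal{L}_VR)(\omega_1,\xi)\xi=2Q\omega_1+4n\omega_1$ in two lines, and comparison with the second computation immediately gives the $\eta$-Einstein relation. So your plan is sound, but the obstacle you flag---tracking the $\phi Q\omega_1$, $\phi\omega_1$ and $\nabla(\lambda+\rho r^*)$ contributions through covariant differentiation---is entirely self-imposed; the paper's route via \eqref{3.1}, \eqref{3.2} eliminates those terms at the outset rather than carrying them along and hoping for cancellation.
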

\begin{proof}
Taking the covariant derivative of \eqref{3.4} with respect to an arbitrary vector field $\omega_3$ and using \eqref{2.10}, we get
\begin{eqnarray}\label{3.11}
 &&(\nabla_{\omega_3}\mathcal{L}_Vg)(\omega_1,\omega_2)+2(\nabla_{\omega_3}\mathcal{S})(\omega_1,\omega_2)=2\omega_3(\lambda+\rho r^*)g(\omega_1,\omega_2) \nonumber \\
   & & -2\{g(\omega_1,\omega_3)\eta(\omega_2)+g(\omega_2,\omega_3)\eta(\omega_1)-2\eta(\omega_1)\eta(\omega_2)\eta(\omega_3)\}
\end{eqnarray}
for $\omega_1, \omega_2, \omega_3 \in \chi(M)$. Again from \cite{yano} we have the following communication formula
\begin{center}
$(\mathcal{L}_V\nabla_{\omega_3}g-\nabla_{\omega_3}\mathcal{L}g-\nabla_{[V,\omega_3]}g)(\omega_1,\omega_2)=-g((\mathcal{L}_V\nabla)(\omega_1,\omega_3),\omega_2)-g((\mathcal{L}_V\nabla)(\omega_2,\omega_3),\omega_1)$,
\end{center}
where $g$ is a metric connection i.e. $\nabla g=0$ \and for all $\omega_1, \omega_2, \omega_3 \in \chi(M)$. So the above equation becomes
\begin{equation}\label{3.12}
(\nabla_{\omega_3}\mathcal{L}_Vg)(\omega_1,\omega_2)=g((\mathcal{L}_V\nabla)(\omega_1,\omega_3),\omega_2)+g((\mathcal{L}_V\nabla)(\omega_2,\omega_3),\omega_1)
\end{equation}
for all vector fields $\omega_1, \omega_2, \omega_3 \in \chi(M)$. On taking accounts of \eqref{3.11} and \eqref{3.12} and by a straight forward combinatorial computation and also applying the symmetry of $(\mathcal{L}_V\nabla)$, then \eqref{3.12} implies
\begin{eqnarray}\label{3.13}
 && g((\mathcal{L}_V\nabla)(\omega_1,\omega_2),\omega_3)=2\{(\nabla_{\omega_3}\mathcal{S})(\omega_1,\omega_2)-(\nabla_{\omega_1}\mathcal{S})(\omega_2,\omega_3) \nonumber \\
   &&-(\nabla_{\omega_2}\mathcal{S})(\omega_3,\omega_1)\} -2\{(\eta(\omega_3)-\omega_3(\lambda+\rho r^*))g(\omega_1,\omega_2) \nonumber \\
   &&-\eta(\omega_1)\eta(\omega_2)\eta(\omega_3)\}
\end{eqnarray}
for arbitrary vector fields $\omega_1, \omega_2$ and  $\omega_3$ on $M$. Using \eqref{3.1} and \eqref{3.2}, also putting $\omega_2=\xi$ the forgoing equation yields
\begin{equation}\label{3.14}
(\mathcal{L}_V\nabla)(\omega_1,\xi)=2Q\omega_1+4n\omega_1
\end{equation}
for all $\omega_1 \in \chi(M)$. Now, differentiating covariantly this with respect to arbitrary vector field $\omega_2$, we get
\begin{equation}\label{3.15}
(\nabla_{\omega_2}\mathcal{L}_V\nabla)(\omega_1,\xi)=2(\nabla_{\omega_2}Q)\omega_1-(\mathcal{L}_V\nabla)(\omega_1,\omega_2)+\eta(\omega_2)(2Q\omega_1+4n\omega_1).
\end{equation}
Again we know that,
\begin{center}
$(\mathcal{L}_VR)(\omega_1,\omega_2)\omega_3=(\nabla_{\omega_1}\mathcal{L}_V\nabla)(\omega_2,\omega_3)-(\nabla_{\omega_2}\mathcal{L}_V\nabla)(\omega_1,\omega_3)$.
\end{center}
 In view of \eqref{3.15} in the previous relation by putting $\omega_3=\xi$, we acquire
\begin{eqnarray} \label{3.16}
 && (\mathcal{L}_VR)(\omega_1,\omega_2)\xi=2(\nabla_{\omega_1}Q)\omega_2-2(\nabla_{\omega_2}Q)\omega_1 \nonumber \\
 &&+2\eta(\omega_1)\{Q\omega_2+2n\omega_2\}-2\eta(\omega_2)\{Q\omega_1+2n\omega_1\}
\end{eqnarray}
for arbitrary vector fields $\omega_1$ and $\omega_2$ on $M^{2n+1}$. Putting $\omega_2=\xi$ in \eqref{3.16} and using \eqref{2.12}, \eqref{3.1} and \eqref{3.2} we get
\begin{equation}\label{3.17}
(\mathcal{L}_VR)(\omega_1,\xi)\xi=2Q\omega_1+4n\omega_1.
\end{equation} 
 \noindent Taking Lie derivative of $g(\xi,\xi)$ along the potential vector field $V$, in account of \eqref{3.4}
 \begin{equation}\label{3.18}
 \eta(\mathcal{L}_V\xi)=\lambda+\rho r^*.
 \end{equation}
 By substituting $\omega_2=\xi$ into equation \eqref{3.4} and using equations \eqref{2.2} and \eqref{2.6}, we obtain the following:
 \begin{equation}\label{3.19}
 (\mathcal{L}_V\eta)\omega_1-g(\omega_1,\mathcal{L}_\xi)=2(\lambda+\rho r^*)\eta(\omega_1)
 \end{equation}
for arbitrary vector field $\omega_1$ on $M^{2n+1}$. From \eqref{2.11} by putting $\omega_2=\xi$, we get $R(\omega_1,\xi)\xi=\eta(\omega_1)\xi-\omega_1$. Taking Lie derivative of this along the potential vector field $V$ and using \eqref{3.18}, \eqref{3.19} and lemma $3.2$, this reduces to
\begin{equation}\label{3.20}
(\mathcal{L}_VR)(\omega_1,\xi)\xi=2(\lambda+\rho(r+4n^2))(\omega_1-\eta(\omega_1)\xi),
\end{equation}
for all $\omega_1 \in \chi(M)$. Then from \eqref{3.17} we get
\begin{center}
$\mathcal{S}(\omega_1,\omega_2)=\{(\lambda+\rho(r+4n^2))-2n\}g(\omega_1,\omega_2)-(\lambda+\rho(r+4n^2))\eta(\omega_1)\eta(\omega_2)$,
\end{center}
 for all $\omega_1, \omega_2 \in \chi(M)$, which is a $\eta$-Einstein manifold and the theorem is proved.
\end{proof}
\noindent Below we consider a Kenmotsu metric that admits an almost $*-$R-B-S, where the non-zero potential vector field $V$ is collinear to the Reeb vector field $\xi$ at each point.
\begin{thm}
If a Kenmotsu manifold $M^{2n+1}(\phi,\xi,\eta,g)$ possesses an almost $*-$R-B-S with a non-zero potential vector field $V$ collinear to the Reeb vector field $\xi$ and preserves the scalar curvature $r=-2(8n^2+3n-1)$, then $(M,g)$ is an $\eta$-Einstein manifold with $\lambda=6n\{\rho(2n+1)-1\}+2\rho$.
\end{thm}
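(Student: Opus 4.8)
The plan is to write the potential vector field as $V=b\,\xi$ for a smooth function $b$ on $M$, substitute this ansatz into the defining equation \eqref{1.2} of the almost $*-$R-B-S, and extract enough algebraic consequences to pin down both $b$ and the Ricci tensor. First I would compute $(\mathcal{L}_{b\xi}g)(\omega_1,\omega_2)$ using \eqref{2.9}: since $\mathcal{L}_{b\xi}g(\omega_1,\omega_2)=b(\mathcal{L}_\xi g)(\omega_1,\omega_2)+(\omega_1 b)\eta(\omega_2)+(\omega_2 b)\eta(\omega_1)$, and $(\mathcal{L}_\xi g)(\omega_1,\omega_2)=2g(\omega_1,\omega_2)-2\eta(\omega_1)\eta(\omega_2)$ by \eqref{2.13}, this gives an explicit expression. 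Plugging into \eqref{1.2} together with the $*$-Ricci formula \eqref{3.3} (so $\mathcal{S}^*=\mathcal{S}+(2n-1)g+\eta\otimes\eta$ and $r^*=r+4n^2$) yields
\begin{equation}\label{pp1}
b\{g(\omega_1,\omega_2)-\eta(\omega_1)\eta(\omega_2)\}+\tfrac12(\omega_1 b)\eta(\omega_2)+\tfrac12(\omega_2 b)\eta(\omega_1)+\mathcal{S}(\omega_1,\omega_2)+(2n-1)g(\omega_1,\omega_2)+\eta(\omega_1)\eta(\omega_2)=(\lambda+\rho r^*)g(\omega_1,\omega_2).
\end{equation}

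Next I would specialize. Setting $\omega_1=\omega_2=\xi$ in \eqref{pp1} and using $\mathcal{S}(\xi,\xi)=-2n$ from \eqref{2.12} gives one scalar relation determining $\xi b$ in terms of $\lambda$ and $r$. Setting $\omega_2=\xi$ with $\omega_1\perp\xi$ isolates the gradient of $b$ along directions orthogonal to $\xi$; comparing with the previous relation should force $db=(\xi b)\eta$, i.e. $b$ is constant along $\ker\eta$. Then, taking $\omega_1,\omega_2$ both orthogonal to $\xi$ in \eqref{pp1} shows that $\mathcal{S}$ restricted to $\ker\eta$ is a multiple of $g$, and combining the two cases gives the $\eta$-Einstein form
\begin{equation}\label{pp2}
\mathcal{S}(\omega_1,\omega_2)=\{\lambda+\rho(r+4n^2)-2n+1-b\}g(\omega_1,\omega_2)+\{b-1-(\text{correction})\}\eta(\omega_1)\eta(\omega_2),
\end{equation}
which I would then reconcile with the canonical $\eta$-Einstein identity \eqref{2.15} that any $\eta$-Einstein Kenmotsu manifold satisfies; matching coefficients expresses everything through the scalar curvature $r$.

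The remaining step is to use the hypothesis that $V=b\xi$ \emph{preserves} $r$, i.e. $V(r)=b\,\xi(r)=0$, hence $\xi(r)=0$ (since $V$ is non-zero), so $r$ is constant along the flow of $\xi$; together with the warped-product structure of a Kenmotsu manifold and the already-derived $\eta$-Einstein condition this should force $r$ to be the specific constant $-2(8n^2+3n-1)$ — here one also invokes the Bianchi-type identities \eqref{3.1}–\eqref{3.2} for $Q$, which on an $\eta$-Einstein Kenmotsu manifold convert $\xi(r)=0$ plus consistency of \eqref{pp1} differentiated along $\xi$ into an algebraic equation for $r$. Substituting this value of $r$ back into the coefficient expressions yields $\lambda=6n\{\rho(2n+1)-1\}+2\rho$, and \eqref{pp2} becomes the asserted $\eta$-Einstein equation. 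The main obstacle I anticipate is the bookkeeping in the differentiation step: showing that the compatibility of \eqref{pp1} under $\nabla_\xi$ (using \eqref{3.2} for $\nabla_\xi Q$ and \eqref{2.9}) really does close up to a single polynomial condition on $r$ rather than leaving a free function — this is where the hypothesis on $r$ and the rigidity of the Kenmotsu structure must be used decisively, and it is easy to drop or mis-sign a term.
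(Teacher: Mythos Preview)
Your opening matches the paper's line for line: set $V=\alpha\xi$, compute $\mathcal{L}_{\alpha\xi}g$ from \eqref{2.9}, substitute into \eqref{3.4}, and specialize to $\omega_1=\omega_2=\xi$ and then $\omega_2=\xi$ to extract $\xi(\alpha)$ and $d\alpha=(\xi\alpha)\,\eta$. Your Lie-derivative formula is in fact more careful than the paper's \eqref{5.9}, since you retain the term $2b\{g-\eta\otimes\eta\}$ coming from $\mathcal{L}_\xi g$, which the paper drops. Through the $\eta$-Einstein form the two arguments coincide.

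The gap is the closing step. You read ``preserves the scalar curvature'' as the hypothesis $V(r)=0\Rightarrow\xi(r)=0$ and plan to combine this with \eqref{3.2} and ``the warped-product structure'' to pin down $r$; but you give no concrete computation, and as you yourself flag, it is not clear this closes to a single algebraic equation for $r$ rather than merely a differential relation along $\xi$. The paper never uses that hypothesis. Instead, from $d\alpha=c\,\eta$ it differentiates covariantly along an arbitrary $\omega_2$ and invokes the symmetry of $Hess_\alpha$ together with \eqref{2.9}--\eqref{2.10} to arrive at an identity of the form $\{\xi(\alpha)+6n\}(\nabla_{\omega_1}\eta)\omega_2=0$ for $\omega_1,\omega_2\perp\xi$; since $(\nabla_{\omega_1}\eta)\omega_2\not\equiv 0$ this forces $\xi(\alpha)=-6n$, whence $\lambda+\rho r^*=-6n$, and the explicit Ricci tensor \eqref{5.12}, the value of $r$, and $\lambda=6n\{\rho(2n+1)-1\}+2\rho$ all follow by tracing. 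That Hessian-symmetry maneuver is the concrete device your outline lacks.
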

\begin{proof}
Since the potential vector field $V$ is parallel to the Reeb vector field $\xi$, then $V=\alpha\xi$ for some smooth function $\alpha$, from \eqref{2.9}, it follows that
\begin{equation}\label{5.9}
(\mathcal{L}_Vg)(\omega_1,\omega_2)=\omega_1(\alpha)\eta(\omega_2)+\omega_2(\alpha)\eta(\omega_1)
\end{equation}
for any vector fields $\omega_1$ and $\omega_2$ $\in$ $\chi(M)$. By applying the anti-symmetry of $\phi$, then the equation \eqref{3.4} implies
\begin{eqnarray}\label{5.10}
&& \omega_1(\alpha)\eta(\omega_2)+\omega_2(\alpha)\eta(\omega_1)+2\mathcal{S}(\omega_1,\omega_2) \nonumber \\
&& =2\{(\lambda+\rho r^*)-(2n-1)\}g(\omega_1,\omega_2)-2\eta(\omega_1)\eta(\omega_2).
\end{eqnarray}
Now setting $\omega_1=\xi$ and $\omega_2=\xi$ in \eqref{5.10} and using \eqref{2.12} gives $\xi(\alpha)=\lambda+\rho r^*-3n$. Similarly plugging $\omega_2=\xi$ in \eqref{5.10} and also using \eqref{2.12}, we have
\begin{eqnarray}\label{5.11}
  \omega_1(\alpha) &=& \{2(\lambda+\rho r^*)-\xi(\alpha)\}\eta(\omega_1) \nonumber \\
              &=& \{\xi(\alpha)+6n\}
\end{eqnarray}
for all $\omega_1 \in \chi(M)$. Taking its covariant derivative along $\omega_2 \in \chi(M)$, and using \eqref{2.9}, we obtain
\begin{center}
$g(\nabla_{\omega_2}\nabla\alpha,\omega_1)=\omega_2(\xi(\alpha))\eta(\omega_1+(\xi(\alpha))\{g(\omega_1,\omega_2)-\eta(\omega_1)\eta(\omega_2)\}$.
\end{center}
Since $Hess_\alpha$ is symmetry, it follows that
\begin{center}
$\omega_1(\xi(\alpha))\eta(\omega_2)-\omega_2(\xi(\alpha))\eta(\omega_1)=2\{\xi(\alpha)+6n\}\{g(\omega_1,\omega_2)-\eta(\omega_1)\eta(\omega_2)\}$,
\end{center}
using \eqref{2.10}, which yields that
\begin{center}
$\{\xi(\alpha)+6n\}(\nabla_{\omega_1}\eta)\omega_2=0$ \hspace{1cm} $\forall\hspace{3mm} \omega_1, \omega_2 \bot \xi$,
\end{center}
Since $(\nabla_{\omega_1}\eta)\omega_2\neq0$ on $M$, the last equation implies $\xi(\alpha)=-6n$, and as a consequence, $\nabla\alpha=-6n$ on $M$. This indicates that $\alpha$ is not constant on $M$. Moreover, \eqref{5.9} implies that $V$ is a Killing vector field, thus making $(M,g)$ *-Einstein (trivial). Furthermore, from \eqref{5.11}, we deduce that $\lambda+\rho r^*=-6n$, and \eqref{5.10} can be simplified to
\begin{equation}\label{5.12}
\mathcal{S}(\omega_1,\omega_2)=(1-8n)g(\omega_1,\omega_2)+\eta(\omega_1)\eta(\omega_2).
\end{equation}
Hence, using \eqref{5.12} in the equation \eqref{3.3} we get $Ric_g=-6ng+\eta\otimes\eta$ and corresponding $r^*=-6n(2n+1)+2$, moreover, the scalar curvature is $r=-2(8n^2+3n-1)$
and $\lambda=6n\{\rho(2n+1)-1\}+2\rho$, which finishes the proof.
\end{proof}

\begin{cor}
Let \(M^{2n+1}(\phi,\xi,\eta,g)\) be a Kenmotsu manifold that possesses an almost *-R-B-S with a non-zero potential vector field \(V\) non-collinear to the Reeb vector field \(\xi\). Then scalar curvature \(r\) is not preserved with the value \(r=-2(8n^2+3n-1)\), and \(M\) do not necessarily satisfy the \(\eta\)-Einstein condition.
\end{cor}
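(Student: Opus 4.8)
The plan is to read this corollary as making precise the sense in which collinearity of $V$ with $\xi$ was \emph{essential} to Theorem 3.6, and then to back the two ``not necessarily'' assertions with an explicit witness taken from the later sections of the paper. Note first that, by Theorem 3.5, every Kenmotsu manifold carrying an almost $*$-R-B-S is already $\eta$-Einstein; hence the intended content of the second assertion must be that such a manifold need not satisfy the \emph{particular} $\eta$-Einstein relation \eqref{5.12} that collinearity forced, and similarly that $r$ need not take the \emph{particular} value $-2(8n^2+3n-1)$.

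First I would isolate where collinearity enters the proof of Theorem 3.6. Writing $V=\alpha\xi+U$ with $g(U,\xi)=0$ and $U\not\equiv 0$ (the non-collinearity hypothesis), I would recompute $(\mathcal L_Vg)(\omega_1,\omega_2)=g(\nabla_{\omega_1}V,\omega_2)+g(\nabla_{\omega_2}V,\omega_1)$ using \eqref{2.9}: identity \eqref{5.9} picks up the extra symmetric term $g(\nabla_{\omega_1}U,\omega_2)+g(\nabla_{\omega_2}U,\omega_1)$, which in general is \emph{not} of the shape $\eta\odot(\cdot)$. Substituting into \eqref{3.4} and contracting with $\xi$ as in the proof of Theorem 3.6, the scalar relation that previously forced $\xi(\alpha)=-6n$ — and with it $r=-2(8n^2+3n-1)$ and \eqref{5.12} — is replaced by one involving $\nabla U$ as well, which no longer pins down $r$. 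Thus the argument of Theorem 3.6 collapses structurally once $U\neq 0$; but this only shows the \emph{proof} fails, not the \emph{conclusion}.

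To see that the conclusion itself fails, I would invoke the gradient almost $*$-R-B-S on a Kenmotsu manifold constructed in the final section. A gradient potential $V=\nabla f$ is generically not pointwise collinear with $\xi$, and in the regime treated in Section 5 (where $\xi$ preserves $r$) such a manifold is Einstein with $r=-n(2n-1)$. Since $-n(2n-1)=-2(8n^2+3n-1)$ would require $14n^2+7n-2=0$, which has no solution with $n\geq 1$, the scalar curvature of this example is not $-2(8n^2+3n-1)$; and although an Einstein metric is trivially $\eta$-Einstein, it does not satisfy \eqref{5.12}. This establishes both claims.

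The step I expect to be the main obstacle is the logical one flagged above: a statement of the form ``$r$ is not preserved with the value $\dots$, and $M$ need not be $\eta$-Einstein'' is not justified merely by observing that one derivation stalls. I would have to (i) rule out the degenerate possibility that the transverse terms $\nabla U$ conspire to reproduce \eqref{5.12} anyway — using \eqref{2.11}, \eqref{2.12} and Lemmas 3.1--3.3 to argue that the component of $\mathcal L_Vg$ along $\ker\eta$ carries curvature data incompatible with \eqref{5.12} unless $U\equiv 0$ — and (ii) check that the example of the final section genuinely has a non-collinear potential field and the scalar curvature claimed in Section 5, so that it is a bona fide counterexample rather than a vacuous one.
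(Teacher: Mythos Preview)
Your approach differs substantially from the paper's. The paper argues by direct computation: it writes $V = f_1\xi + f_2 X$ with $X\perp\xi$, expands $(\mathcal L_V g)(X,\xi)$ and $(\mathcal L_V g)(X,Y)$ using the commutator identities on a Kenmotsu manifold, and then packages the transverse contribution as correction terms $T_{ij}$ added to the Ricci tensor. It concludes by observing that for generic $f_1,\xi f_1$ the trace $\sum_i T(e_i,e_i)$ does not vanish, so neither the $\eta$-Einstein form nor the particular value of $r$ is forced. No example from later sections is invoked.

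Your route via an explicit witness is logically tighter: the paper's claim that the extra terms are ``unlikely'' to cancel is a genericity heuristic, whereas a single example settles the matter. You also correctly flag a tension the paper's own proof ignores, namely that Theorem~3.5 already forces $\eta$-Einstein regardless of $V$, so the second assertion of the corollary can only concern the \emph{specific} relation \eqref{5.12}. That reading is sharper than anything in the paper's argument.

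One caution: do not rest the numerical claim on the value $r=-n(2n-1)$ from Theorem~5.1. In the Section~6 example one has $n=2$ and $\mathcal S(e_i,e_i)=-4$ for all $i$, hence $r=-20$, whereas $-n(2n-1)=-6$; the mismatch indicates the constant in Theorem~5.1 is unreliable for your purposes. For the corollary you only need $r\neq -2(8n^2+3n-1)=-74$ in that example, which follows directly from $r=-20$. Verify non-collinearity of $V=Df$ with $\xi=e_5$ straight from \eqref{3.p111} (the $e_1,\dots,e_4$ components are $2x_1/v_1,\dots,2u_1/v_1$, not identically zero), and your witness argument stands on its own without appeal to Section~5.
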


\begin{proof}
To rigorously demonstrate the validity or invalidity of Theorem 3.6 when considering a potential vector field \( V \) that is not collinear with the Reeb vector field \(\xi\), we proceed as follows:

\noindent A Kenmotsu manifold \( M^{2n+1} \) is defined by the following tensors:
\\
\\
- A quasi-complex structure \(\phi\)
\\
- A Reeb vector field \(\xi\)
\\
- A contact form \(\eta\)
\\
- A metric tensor \(g\)
\\
\\
The fundamental conditions that these structures satisfy are:
\[
\phi^2 = -I + \eta \otimes \xi, \quad d\eta = 2 \eta \wedge \eta, \quad \nabla \xi = \phi - I
\]
\noindent where \( \nabla \) is the Levi-Civita connection associated with the metric \( g \).

\noindent A quasi \(*\)-Ricci-Bourguignon soliton is defined by the following equation:
\[
(\mathcal{L}_V g)(X, Y) = 2 \{S(X, Y) - \rho R(X, Y) - (\lambda + \rho r) g(X, Y)\}
\]
\noindent where:
\\
\\
- \(\mathcal{L}_V g\) is the Lie derivative of the metric \(g\) along the vector field \(V\),
\\
- \(S\) is the Ricci tensor,
\\
- \(R\) is the curvature tensor,
\\
- \(r\) is the scalar curvature,
\\
- \(\rho\) and \(\lambda\) are constants.
\\
\\
Now consider a vector field \(V\) that is not collinear with \(\xi\). We can write:
\[
V = f_1 \xi + f_2 X
\]
\noindent where \(f_1\) and \(f_2\) are smooth functions on \(M\) and \(X\) is a vector field orthogonal to \(\xi\).

\noindent For \(X \perp \xi\), we calculate the Lie derivative of the metric:
\[
(\mathcal{L}_V g)(X, \xi) \quad \text{and} \quad (\mathcal{L}_V g)(X, Y)
\]

\noindent Using the definition of the Lie derivative:
\[
(\mathcal{L}_V g)(X, Y) = V(g(X, Y)) - g([V, X], Y) - g(X, [V, Y])
\]

\noindent For \(V = f_1 \xi + f_2 X\), we have:
\[
(\mathcal{L}_V g)(X, \xi) = (f_1 \xi + f_2 X)(g(X, \xi)) - g([f_1 \xi + f_2 X, X], \xi) - g(X, [f_1 \xi + f_2 X, \xi])
\]

\noindent Since \(g(X, \xi) = 0\) by orthogonality:
\[
(\mathcal{L}_V g)(X, \xi) = - g([f_1 \xi + f_2 X, X], \xi)
\]

\noindent We compute the commutators:
\[
[f_1 \xi, X] = f_1 [\xi, X] + (\xi f_1) X
\]
\[
[f_2 X, X] = f_2 [X, X] + (X f_2) X = 0 \quad \text{(since the commutator of a vector field with itself is zero)}
\]

\noindent Thus:
\[
[f_1 \xi + f_2 X, X] = f_1 [\xi, X] + (\xi f_1) X
\]

\noindent Now consider \( [\xi, X] \). Using the conditions of the Kenmotsu manifold:
\[
\nabla_\xi \phi = 0 \quad \Rightarrow \quad [\xi, X] = \phi X - X
\]

\noindent So:
\[
[f_1 \xi + f_2 X, X] = f_1 (\phi X - X) + (\xi f_1) X
\]

\noindent Substituting into the Lie derivative:
\[
(\mathcal{L}_V g)(X, \xi) = - g(f_1 (\phi X - X) + (\xi f_1) X, \xi)
\]
\noindent Since \(\phi X\) is orthogonal to \(\xi\) and \(\eta(X) = 0\):
\[
(\mathcal{L}_V g)(X, \xi) = - f_1 g(-X, \xi) = 0
\]

\noindent Now consider \((\mathcal{L}_V g)(X, Y)\) for \(X, Y \perp \xi\):
\[
(\mathcal{L}_V g)(X, Y) = (f_1 \xi + f_2 X)(g(X, Y)) - g([f_1 \xi + f_2 X, X], Y) - g(X, [f_1 \xi + f_2 X, Y])
\]

\noindent Since \(g(X, Y)\) is independent of \(\xi\) (as \(\xi\) is orthogonal to \(X\) and \(Y\)):
\[
(\mathcal{L}_V g)(X, Y) = - g([f_1 \xi + f_2 X, X], Y) - g(X, [f_1 \xi + f_2 X, Y])
\]

\noindent Using the commutators computed above:
\[
[f_1 \xi + f_2 X, Y] = f_1 (\phi Y - Y) + (\xi f_1) Y
\]
\[
(\mathcal{L}_V g)(X, Y) = - g(f_1 (\phi X - X) + (\xi f_1) X, Y) - g(X, f_1 (\phi Y - Y) + (\xi f_1) Y)
\]

\noindent Expanding and simplifying the terms, using orthogonality:
\[
(\mathcal{L}_V g)(X, Y) = - f_1 g(\phi X - X, Y) - (\xi f_1) g(X, Y) - f_1 g(X, \phi Y - Y) - (\xi f_1) g(X, Y)
\]

\noindent Simplifying further:
\[
(\mathcal{L}_V g)(X, Y) = - f_1 (g(\phi X, Y) - g(X, Y)) - 2 (\xi f_1) g(X, Y) - f_1 (g(X, \phi Y) - g(X, Y))
\]

\noindent Given the properties of orthogonality and symmetry of the tensors:
\[
(\mathcal{L}_V g)(X, Y) = - f_1 (g(\phi X, Y) - g(X, Y)) - 2 (\xi f_1) g(X, Y) - f_1 (g(X, \phi Y) - g(X, Y))
\]

\noindent At this point, to rigorously determine whether the manifold remains \(\eta\)-Einstein with \(\lambda = 6n \{\rho(2n + 1) - 1\} + 2\rho\), we proceed with the following steps:
\\
\\
1. Explicitly calculate the components of the Ricci tensor \(S\) under the new conditions.
\\
2. Verify if the modified Ricci tensor still satisfies the conditions required for the manifold to be \(\eta\)-Einstein.
\\
3. Check if the scalar curvature \(r\) remains invariant.
\\
\\
Consider the metric \( g \) of a Kenmotsu manifold and the form \(\eta\), and recall that the Reeb vector field \(\xi\) and the structure \(\phi\) satisfy:
\[
(\nabla_\xi \phi)X = 0, \quad \phi^2 = -I + \eta \otimes \xi, \quad d\eta = 2 \eta \wedge \eta, \quad \nabla \xi = \phi - I
\]

\noindent The Ricci tensor \( S \) of a Kenmotsu manifold is defined as:
\[
S(X, Y) = \text{Ric}(X, Y) = \sum_{i=1}^{2n+1} R(e_i, X, Y, e_i)
\]
\noindent where \(\{e_i\}\) is an orthonormal basis of vector fields and \( R \) is the Riemann curvature tensor.

\noindent When we consider a potential field \( V = f_1 \xi + f_2 X \) that is not collinear with \(\xi\), we must include the additional curvature terms introduced by the non-collinear components.

\noindent A Ricci tensor \( S \) in an \(\eta\)-Einstein manifold satisfies:
\[
S(X, Y) = \alpha g(X, Y) + \beta \eta(X) \eta(Y)
\]
\noindent for some constants \(\alpha\) and \(\beta\).

\noindent We introduce the new components of the Ricci tensor:
\[
S'_{ij} = S_{ij} + T_{ij}
\]
\noindent where \( T_{ij} \) are the additional terms arising from the components of \( V \) that are not collinear with \( \xi \).

\noindent We calculate \( T_{ij} \) explicitly:
\[
T_{ij} = -\frac{1}{2} (f_1 (\phi X_i - X_i) + (\xi f_1) X_i, X_j)
\]
Using the orthonormal metric:
\[
T_{ij} = -\frac{1}{2} \left( f_1 g(\phi X_i, X_j) - f_1 g(X_i, X_j) + (\xi f_1) g(X_i, X_j) \right)
\]

\noindent We verify if the tensor \( S' \) maintains the \(\eta\)-Einstein form:
\[
S'_{ij} = \alpha g(X_i, X_j) + \beta \eta(X_i) \eta(X_j)
\]

\noindent The scalar curvature \( r \) is the trace of the Ricci tensor:
\[
r = \text{tr}(S) = \sum_{i=1}^{2n+1} S(e_i, e_i)
\]

\noindent When \( S \) is modified by \( S' \):
\[
r' = \text{tr}(S') = \sum_{i=1}^{2n+1} S'(e_i, e_i) = \sum_{i=1}^{2n+1} \left( S(e_i, e_i) + T(e_i, e_i) \right)
\]

\noindent We calculate \( T(e_i, e_i) \):
\[
T(e_i, e_i) = -\frac{1}{2} \left( f_1 g(\phi e_i, e_i) - f_1 g(e_i, e_i) + (\xi f_1) g(e_i, e_i) \right)
\]

\noindent Substituting these terms into the trace:
\[
r' = r + \sum_{i=1}^{2n+1} T(e_i, e_i)
\]

\noindent If \( T(e_i, e_i) = 0 \) for all \(i\), then \( r' = r \). However, given that \( f_1 \) and \(\xi f_1\) are generic functions, it is unlikely that \( \sum_{i=1}^{2n+1} T(e_i, e_i) = 0 \), implying that the scalar curvature is not invariant.
\\
\\
Given the introduction of additional components \( T_{ij} \) arising from the non-collinearity of \( V \) and \(\xi\), we can conclude that:
\\
\\
1. The components of the Ricci tensor \( S' \) do not necessarily satisfy the \(\eta\)-Einstein condition.
\\
2. The scalar curvature \( r \) does not remain invariant.
\\
\\
Therefore, \textit{Theorem 3.6} is not valid when considering a non-zero potential vector field \( V \) that is not collinear with the Reeb vector field \(\xi\).
\end{proof}

\section{Almost $*-$R-B-S on $(\kappa,\mu)'$-almost Kenmotsu manifold with $\kappa<-1$}
In this section, we examine a $(2n+1)$-dimensional almost Kenmotsu manifold in which the characteristic vector field $\xi$ satisfies the $(\kappa,-2)'$-nullity distribution. We consider a metric $g$ that represents an almost $*-$R-B-S. Here, we introduce a lemma that will be used in our subsequent analysis.
\begin{lem}
\cite{zhao} On a $(\kappa,\mu)'$-almost Kenmotsu manifold with $\kappa<-1$ the $*$-Ricci tensor is given by
\begin{equation}\label{4.1}
\mathcal{S}^*(\omega_1,\omega_2)=-(\kappa+2)\{g(\omega_1,\omega_2)-\eta(\omega_1)\eta(\omega_2)\}
\end{equation}
for any vector fields $\omega_1$ and $\omega_2$ on $M$.
\end{lem}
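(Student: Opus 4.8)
The plan is to evaluate the defining trace \eqref{1.1} of $\mathcal{S}^*$ in a frame adapted to $h'$ and to feed in the curvature information available on $(\kappa,-2)'$-almost Kenmotsu manifolds. First I would record that, by \eqref{2.21} and $\kappa<-1$, the symmetric operator $h'$ restricted to $\ker\eta$ has precisely the two constant eigenvalues $\pm\lambda$ with $\lambda=\sqrt{-(\kappa+1)}>0$, whose eigendistributions $[\lambda]'$ and $[-\lambda]'$ each have rank $n$; since $h'\phi+\phi h'=0$ and $h'\xi=0$, the tensor $\phi$ kills $\xi$ and maps $[\lambda]'$ isometrically onto $[-\lambda]'$. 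I then fix a local orthonormal frame $\{\,\xi,\,u_1,\dots,u_n,\,\phi u_1,\dots,\phi u_n\,\}$ with $u_i\in[\lambda]'$ and substitute it into $\mathcal{S}^*(\omega_1,\omega_2)=\tfrac12\sum_a g\!\left(R(\omega_1,\phi\omega_2)\phi e_a,\,e_a\right)$. The $\xi$-term vanishes because $\phi\xi=0$; using $\phi(\phi u_i)=-u_i$ on $\ker\eta$ together with the skew-symmetry of $R$ in its last two arguments, the remaining two partial sums coincide, so the trace collapses to $\mathcal{S}^*(\omega_1,\omega_2)=\sum_{i=1}^n g\!\left(R(\omega_1,\phi\omega_2)\phi u_i,\,u_i\right)$.

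The step I expect to be the main obstacle is controlling these curvature terms, because the hypothesis pins down only $R(\cdot,\cdot)\xi$, through the $(\kappa,-2)'$-nullity condition \eqref{2.20}/\eqref{2.23}, while the surviving sum involves $R(\omega_1,\phi\omega_2)\phi u_i$ with no argument equal to $\xi$. To close this gap I would use the full Riemann tensor of a $(\kappa,-2)'$-almost Kenmotsu manifold, which is determined by Dileo--Pastore's analysis: one differentiates the nullity condition, uses the structural identities \eqref{2.16} and \eqref{2.26} for $\nabla\xi$ and $\nabla\eta$ (together with the corresponding formula for $\nabla h'$), and applies the second Bianchi identity to propagate the condition to all of $R$. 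A variant I would actually prefer is to derive first, on any almost Kenmotsu manifold with $\xi$ in the $(\kappa,-2)'$-nullity distribution, a pointwise linear relation expressing $\mathcal{S}^*$ through $\mathcal{S}$, $g$, $\eta\otimes\eta$ and $g(h'\cdot,\cdot)$ (the analogue of \eqref{3.3} in the non-normal setting), obtained by contracting a Bianchi identity against the formula for $\nabla\phi$, and then to substitute the known Ricci operator $Q\omega_1=-2n\omega_1+2n(\kappa+1)\eta(\omega_1)\xi-2n h'\omega_1$ from \eqref{2.24}.

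After the substitution the $h'$-dependent terms are expected to cancel, and a routine simplification with $h'\phi=-\phi h'$, $h'$ trace-free, $h'^2=(\kappa+1)\phi^2$ and $\phi^2=-I+\eta\otimes\xi$ should leave exactly $\mathcal{S}^*(\omega_1,\omega_2)=-(\kappa+2)\{g(\omega_1,\omega_2)-\eta(\omega_1)\eta(\omega_2)\}$. Two internal checks guide the computation: evaluating at $\omega_2=\xi$ must give $0$ on both sides, which is automatic since $\phi\xi=0$ forces $\mathcal{S}^*(\cdot,\xi)\equiv0$ already from \eqref{1.1}; and tracing yields the $*$-scalar curvature $r^*=-2n(\kappa+2)$. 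The genuinely delicate point throughout is the bookkeeping of the $h'$- and $\phi$-dependent curvature contributions, together with the $[\lambda]'/[-\lambda]'$ splitting, so that the cancellation is exact; once the correct $\nabla\phi$ formula and the curvature expression are written down, the remaining algebra is routine.
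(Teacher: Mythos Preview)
The paper does not prove this lemma at all: it is quoted verbatim from \cite{zhao} (Dai--Zhao--De) and used as a black box, so there is no ``paper's own proof'' to compare your proposal against.

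As a stand-alone argument your outline is sound. The frame reduction is correct: the $\xi$-summand vanishes since $\phi\xi=0$, and the pairing $u_i\leftrightarrow\phi u_i$ halves the trace as you describe. You are also right that the $(\kappa,-2)'$-nullity condition by itself only controls $R(\cdot,\cdot)\xi$, so one must import more curvature information; both of the routes you propose are the standard ones in the literature. The second route (derive an analogue of \eqref{3.3} relating $\mathcal{S}^*$ to $\mathcal{S}$, then insert $Q$ from \eqref{2.24}) is in fact exactly how the cited paper \cite{zhao} proceeds, and is cleaner than computing the full Riemann tensor. Your consistency checks ($\mathcal{S}^*(\cdot,\xi)=0$ and $r^*=-2n(\kappa+2)$) are correct and helpful. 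The only caution is that the almost-Kenmotsu formula for $\nabla\phi$ involves $h$ (unlike the Kenmotsu identity \eqref{2.8}), so the $h'$-bookkeeping you flag as delicate is indeed where errors typically creep in; once that identity is written correctly the cancellation goes through as you anticipate.
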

\begin{thm}
Let $M^{2n+1}(\phi,\xi,\eta,g)$ be an almost Kenmotsu manifold such that $\xi$ belongs to $(\kappa,-2)'$-nullity distribution where $\kappa<-1$. If the metric $g$ represents an almost $*-$R-B-S satisfying $\lambda\neq-\rho(r+4n^2)-\frac{1}{2}(D \lambda+\rho Dr)$, then $M$ is Ricci-flat and is locally isometric to $\mathbb{H}^{n+1}(-4)\times\mathbb{R}^n$.
\end{thm}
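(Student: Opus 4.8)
I would run the Section~3 machinery in the $(\kappa,-2)'$-setting: substitute the $*$-Ricci tensor of Lemma 4.1 into the soliton equation, differentiate, push the result through the Yano commutation formula of \cite{yano}, and compare $(\mathcal L_VR)(\cdot,\xi)\xi$ computed in two different ways. Since $\mathcal S^*$ in Lemma 4.1 is proportional to $\kappa+2$, the comparison will carry $(\kappa+2)$ as an overall factor, and the stated hypothesis will prevent the complementary factor from vanishing. Concretely, inserting $\mathcal S^*(\omega_1,\omega_2)=-(\kappa+2)\{g(\omega_1,\omega_2)-\eta(\omega_1)\eta(\omega_2)\}$ into \eqref{1.2} gives
\[
\tfrac12(\mathcal L_Vg)(\omega_1,\omega_2)=(\lambda+\rho r^*+\kappa+2)\,g(\omega_1,\omega_2)-(\kappa+2)\eta(\omega_1)\eta(\omega_2),
\]
while tracing Lemma 4.1 yields $r^*=-2n(\kappa+2)$. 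Setting $\omega_2=\xi$ gives $(\mathcal L_Vg)(\omega_1,\xi)=2(\lambda+\rho r^*)\eta(\omega_1)$; the usual Lie-derivative bookkeeping on $\eta(\xi)=1$ and $\eta(\omega_1)=g(\omega_1,\xi)$ then produces $(\mathcal L_V\eta)(\omega_1)$, $\eta(\mathcal L_V\xi)=-(\lambda+\rho r^*)$, and the tangential residue $W:=\mathcal L_V\xi+(\lambda+\rho r^*)\xi\in\mathrm{Ker}\,\eta$.

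\textbf{Steps 2--3 (extracting $\kappa=-2$).} Differentiate the displayed identity covariantly, using \eqref{2.26} for $\nabla\eta$ (as $\kappa$ is constant, $\nabla\mathcal S^*$ is completely explicit in $g,\eta,h'$); combining with the commutation formula of \cite{yano} exactly as in the derivation of $\mathcal L_V\nabla$ in Section~3, solve for $g((\mathcal L_V\nabla)(\omega_1,\omega_2),\omega_3)$ and specialise $\omega_2=\xi$ to obtain a closed formula for $(\mathcal L_V\nabla)(\omega_1,\xi)$ — this step also needs the structure equations for $\nabla h'$ on a $(\kappa,-2)'$-almost Kenmotsu manifold from \cite{dile}. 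Next compute $(\mathcal L_VR)(\omega_1,\xi)\xi$ via $(\mathcal L_VR)(\omega_1,\omega_2)\omega_3=(\nabla_{\omega_1}\mathcal L_V\nabla)(\omega_2,\omega_3)-(\nabla_{\omega_2}\mathcal L_V\nabla)(\omega_1,\omega_3)$ together with \eqref{2.24}, and independently by Lie-differentiating $R(\omega_1,\xi)\xi=\kappa(\omega_1-\eta(\omega_1)\xi)$ (which follows from \eqref{2.23} with $\omega_2=\xi$) with the help of \eqref{2.20}, \eqref{2.23} and Step~1. Equating the two expressions and taking a suitable contraction (over $\omega_1$, or by testing against an eigenvector of $h'$) makes the $h'$- and $W$-dependent terms drop out and leaves an identity of the form
\[
(\kappa+2)\Bigl[\lambda+\rho(r+4n^2)+\tfrac12\bigl(D\lambda+\rho Dr\bigr)\Bigr]=0 .
\]
Since the bracketed factor is non-zero by hypothesis, $\kappa=-2$.

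\textbf{Step 4 (identification).} With $\kappa=-2$, Lemma 4.1 gives $\mathcal S^*\equiv0$, i.e. $M$ is $*$-Ricci flat. A $(\kappa,-2)'$-almost Kenmotsu manifold with $\kappa=-2$ is, by the classification of Dileo--Pastore \cite{dile}, locally isometric to $\mathbb H^{n+1}(-4)\times\mathbb R^n$; the dimensions match, $2n+1=(n+1)+n$, and \eqref{2.25} gives $r=2n(\kappa-2n)=-4n(n+1)$, which is exactly the scalar curvature of that product. This proves the theorem.

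\textbf{Main obstacle.} The real work is in Steps 2--3. Although $\mathcal S^*$ in Lemma 4.1 carries no $h'$, both $\nabla\eta$ \eqref{2.26} and the nullity curvature \eqref{2.20}, \eqref{2.23} do, so one must propagate the $h'$-terms and the unknown field $W=\mathcal L_V\xi+(\lambda+\rho r^*)\xi$ through the entire computation of $\mathcal L_V\nabla$ and $\mathcal L_VR$ and verify that they all cancel from the final contracted identity; this is exactly where the $\nabla h'$ structure equations of \cite{dile} and the facts that $h'$ is symmetric with $\eta\circ h'=0$, that $\nabla_\xi\xi=0$, and that $\kappa$ is constant get used.
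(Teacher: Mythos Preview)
Your approach is essentially the paper's: substitute Lemma~4.1 into \eqref{1.2}, push through Yano's commutation formula to compute $(\mathcal L_VR)(\omega_1,\xi)\xi$, compare with the Lie derivative of the nullity relation, and extract the factor $(\kappa+2)$. Two points of contact with the actual argument are worth sharpening. First, your parenthetical $R(\omega_1,\xi)\xi=\kappa(\omega_1-\eta(\omega_1)\xi)$ drops the $-2h'\omega_1$ term coming from $\mu=-2$ in \eqref{2.20}; this is exactly the source of the $(\mathcal L_Vh')$-contribution you will have to kill. Second, the paper does not get rid of the $h'$- and $\mathcal L_Vh'$-terms by a trace: it replaces $\omega_1,\omega_2$ by $\phi\omega_1,\phi\omega_2$ (so the $\eta$-terms disappear), then takes $\omega_1\in[-\alpha]'$ and $V\in[\alpha]'$ and proves the separate lemma that $(\mathcal L_{\omega_1}h')\omega_2=0$ whenever $\omega_1,\omega_2$ lie in the same $h'$-eigenspace, using the explicit $\nabla$ on eigenbundles from \cite{dile}. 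Your ``suitable contraction'' is too optimistic here; it is this eigenspace computation that produces the clean product \eqref{4.15}.
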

\begin{proof}
In light of the identities \eqref{1.2} and \eqref{4.1} and with the help of $r^*=r+4n^2$ to acquire
\begin{equation}\label{4.2}
(\mathcal{L}_Vg)(\omega_1,\omega_2)=2\{(\kappa+2)+\lambda+\rho(r+4n^2)\}g(\omega_1,\omega_2)-2(\kappa+2)\eta(\omega_1)\eta(\omega_2)
\end{equation}
for all vector fields $\omega_1$ and $\omega_2$ on $M$. Now, we take a covariant derivative of \eqref{4.2} along the arbitrary vector field $\omega_3$ and using \eqref{2.6} to yield
\begin{eqnarray}\label{4.3}
&&(\nabla_{\omega_3}\mathcal{L}_Vg)(\omega_1,\omega_2)=\{\omega_3(\lambda)+\rho \omega_3(r)\}g(\omega_1,\omega_2)-2(\kappa+2)[\eta(\omega_2)g(\omega_1,\omega_3) \nonumber \\
&&+\eta(\omega_1)g(\omega_2,\omega_3)+\eta(\omega_2)g(h'\omega_3,\omega_1)+\eta(\omega_1)g(h'\omega_3,\omega_2) \nonumber \\
&&-2\eta(\omega_1)\eta(\omega_2)\eta(\omega_3)].
\end{eqnarray}
Then using \eqref{3.12} and by the symmetry of $(\mathcal{L}_V\nabla)$ from the above equation \eqref{4.3}, we obtain
\begin{eqnarray} \label{4.4}
(\mathcal{L}_V\nabla)(\omega_1,\omega_2) &=& -2(\kappa+2)[g(\omega_1,\omega_2)+g(h'\omega_1,\omega_2)-\eta(\omega_1)\eta(\omega_2)] \nonumber \\
   &-& \{(D\lambda+\rho Dr)g(\omega_1,\omega_2)-(\omega_1(\lambda)+\omega_1(r))\omega_2 \nonumber  \\
   &-& (\omega_2(\lambda)+\omega_2(r))\omega_1\}
\end{eqnarray}
for all $\omega_1$, $\omega_2$ $\in$ $\chi(M)$. We insert $\omega_2=\xi$ and with the help of the identities \eqref{2.2}, \eqref{2.6} and \eqref{2.17} to achieve
\begin{eqnarray} \label{4.5}
(\mathcal{L}_V\nabla)(\omega_1,\xi)&=&-\{(D\lambda+\rho Dr)\eta(\omega_1)-(\omega_1(\lambda)+\omega_1(r))\xi \nonumber  \\
   &-&(\xi(\lambda)+\xi(r))\omega_1\}
\end{eqnarray}
for arbitrary vector $\omega_1$ on $M$. Now taking differentiation \eqref{4.5} covariantly along arbitrary vector field $\omega_2$ and using \eqref{2.16} and \eqref{4.4} into account we can get
\begin{eqnarray}\label{4.6}
(\nabla_{\omega_2}\mathcal{L}_V\nabla)(\omega_1,\xi)&=&2(\kappa+2)[g(\omega_1,\omega_2)+g(h'\omega_1,\omega_2)-\eta(\omega_1)\eta(\omega_2)]\xi \nonumber \\
&-&\{(D\lambda+\rho Dr)(\nabla_{\omega_2}\eta)\omega_1+\eta(\omega_1)\omega_2(D\lambda+\rho Dr)\} \nonumber \\
&+&\{\omega_1(\lambda)+\rho \omega_1(r)\}\nabla_{\omega_2}\xi+\omega_2\{\xi(\lambda)+\rho \xi(r)\}\omega_1
\end{eqnarray}
 for any vector fields $\omega_1$ and $\omega_2$ on $M$. Yano demonstrated once again the well-known curvature property,
 \begin{center}
  $(\mathcal{L}_VR)(\omega_1,\omega_2)\omega_3=(\nabla_{\omega_1}\mathcal{L}_V\nabla)(\omega_2,\omega_3)-(\nabla_{\omega_2}\mathcal{L}_V\nabla)(\omega_1,\omega_3)$.
  \end{center}
 Setting $\omega_3=\xi$ then using \eqref{4.6}, we get
 \begin{eqnarray} \label{4.7}
(\mathcal{L}_VR)(\omega_1,\omega_2)\xi &=& \omega_1(D\lambda+\rho Dr)\eta(\omega_2)-\omega_2(D\lambda+\rho Dr)\eta(\omega_1) \nonumber \\
   &+&\omega_2(\xi(\lambda)+\rho\xi(r))\omega_1-\omega_1(\xi(\lambda)+\rho\xi(r))\omega_2
\end{eqnarray}
 for any abitrary vector field $\omega_1$ and $\omega_2$ on $M$. Now, we take a Lie derivative of \eqref{2.20} along the potential vector field $V$ and also making use of \eqref{2.2} and \eqref{2.17} to yield
\begin{eqnarray}\label{4.8}
 && (\mathcal{L}_VR)(\omega_1,\xi)\xi=\kappa[g(\omega_1,\mathcal{L}_V\xi)\xi-2\eta(\mathcal{L}_V\xi)\omega_1-((\mathcal{L}_V\eta)\omega_1)\xi] \nonumber \\
&&+2[2\eta(\mathcal{L}_V\xi)h'\omega_1-\eta(\omega_1)(h'(\mathcal{L}_V\xi))-g(h'\omega_1,\mathcal{L}_V\xi)\xi \nonumber\\
&&-(\mathcal{L}_Vh')\omega_1]
 \end{eqnarray}
 for any $\omega_1$ $\in$ $\chi(M)$. We insert $\omega_2=\xi$ into \eqref{4.2} to yield
 \begin{equation}\label{4.9}
 (\mathcal{L}_V\eta)\omega_1-g(\omega_1,\mathcal{L}_V\xi)=2\{\lambda+\rho(r+4n^2)\}\eta(\omega_1)
 \end{equation}
 for any $\omega_1$ $\in$ $\chi(M)$. Now putting $\omega_1=\xi$ in \eqref{4.9} we get
 \begin{equation}\label{4.10}
 \eta(\mathcal{L}_V\xi)=-2\{\lambda+\rho(r+4n^2)\}.
 \end{equation}
 By the help of \eqref{4.7}, \eqref{4.9} and \eqref{4.10}, we can write the equation \eqref{4.8} as
\begin{eqnarray}\label{4.11}
&& 2\kappa\{\lambda+\rho(r+4n^2)+\frac{1}{2}(D \lambda+\rho Dr)\}(\omega_1-\eta(\omega_1)\xi)+4\{\lambda+\rho(r+4n^2)\}h'\omega_1 \nonumber \\
&& -2\eta(\omega_1)h'(\mathcal{L}_V\xi)-2g(h'\omega_1,\mathcal{L}_V\xi)\xi-2(\mathcal{L}_Vh')\omega_1=0
\end{eqnarray}
Taking an inner product of \eqref{4.11} with respect to the arbitrary vector field $\omega_2$ on $M$, we obtain
\begin{eqnarray}\label{4.12}
&&2\kappa\{\lambda+\rho(r+4n^2)+\frac{1}{2}(D \lambda+\rho Dr)\}\{g(\omega_1,\omega_2)-\eta(\omega_1)\eta(\omega_2)\} \nonumber \\
&&+4\{\lambda+\rho(r+4n^2)\}g(h'\omega_1,\omega_2)-2\eta(\omega_1)g(h'(\mathcal{L}_V\xi),\omega_2) \nonumber \\
&&-2g(h'\omega_1,\mathcal{L}_V\xi)\eta(\omega_2)-2g((\mathcal{L}_Vh')\omega_1,\omega_2)=0.
\end{eqnarray}
As the above equation \eqref{4.12} is true for any vector fields $\omega_1$ and $\omega_2$ on $M$. We replacing $\omega_1$ by $\phi(\omega_1)$ and $\omega_2$ by $\phi(\omega_2)$ and taking \eqref{2.5}into account we get as
\begin{eqnarray}\label{4.13}
&& 2\kappa\{\lambda+\rho(r+4n^2)+\frac{1}{2}(D \lambda+\rho Dr)\}g(\phi \omega_1,\phi \omega_2) \nonumber \\
&&+4\{\lambda+\rho(r+4n^2)\}g(h'\phi \omega_1,\phi \omega_2)-2g((\mathcal{L}_Vh')\phi \omega_1,\phi \omega_2)=0
\end{eqnarray}
for all vector fields $\omega_1$ and $\omega_2$ on $M$. Since $spec(h')=\{0,\alpha,-\alpha\}$, let $\omega_1$ and $V$ belong to the eigenspaces of $-\alpha$ and $\alpha$ denoted by $[-\alpha]'$ and $[\alpha]'$ respectively. Then $\phi \omega_1$ $\in$ $[\alpha]'$ (see \cite{dile}). Then \eqref{4.13} can be rewritten as
\begin{equation}\label{4.14}
2\{\lambda+\rho(r+4n^2)+\frac{1}{2}(D \lambda+\rho Dr)\}(\kappa+2)g(\phi \omega_1,\phi \omega_2)-2g((\mathcal{L}_Vh')\phi \omega_1,\phi \omega_2)=0
\end{equation}
To find the value of $g((\mathcal{L}_Vh')\phi \omega_1,\phi \omega_2)$, we establish a more general result in a $(\kappa,\mu)'$-almost Kenmotsu manifold: $(\mathcal{L}_{\omega_1}h')\omega_2=0$, where the vector fields $\omega_1$ and $\omega_2$ belong to the same eigenspaces.

\noindent Without loss of generality, let us assume that $\omega_1$ and $\omega_2$ are elements of $[\alpha]'$, where $spec(h')=\{0,-\alpha,\alpha\}$. We consider a local orthonormal $\phi$-basis given by $\{\xi,e_i,\phi e_i\}$, where $i=1,2,3,...,n$.
\begin{center}
  \[\nabla_{\omega_1}{\omega_2}=\sum_{i=1}^{n}g(\nabla_{\omega_1}{\omega_2},e_i)e_i-(\alpha+1)g(\omega_1,\omega_2)\xi.\]
\end{center}
and
\begin{eqnarray}
 (\mathcal{L}_{\omega_1}h')\omega_2 &=& \mathcal{L}_{\omega_1}(h'\omega_2)-h'(\mathcal{L}_{\omega_1}{\omega_2}) \nonumber \\
                    &=& \alpha(\mathcal{L}_{\omega_1}{\omega_2})-h'(\mathcal{L}_{\omega_1}{\omega_2}) \nonumber \\
                    &=& \alpha(\nabla_{\omega_1}{\omega_2}-\nabla_{\omega_2}{\omega_1})-h'(\nabla_{\omega_1}{\omega_2}-\nabla_{\omega_2}{\omega_1}) \nonumber \\
                    &=& \alpha(\alpha+1)g(\omega_1,\omega_2)\xi-\alpha(\alpha+1)g(\omega_1,\omega_2)\xi \nonumber \\
                    &=& 0. \nonumber
\end{eqnarray}
 Similarly, we can demonstrate that the above results hold true if $\omega_1$ and $\omega_2$ belong to $[-\alpha]'$. For more details see \cite{dile}. Now from \eqref{4.14}, we get
\begin{equation}\label{4.15}
\{\lambda+\rho(r+4n^2)+\frac{1}{2}(D \lambda+\rho Dr)\}(\kappa+2)g(\phi \omega_1,\phi \omega_2)=0
\end{equation}
for all vector fields $\omega_1$ and $\omega_2$ on $M$. As $g(\phi \omega_1,\phi \omega_2)\neq0$ then from the for going equation we have either $\lambda=-\rho(r+4n^2)-\frac{1}{2}(D \lambda+\rho Dr)$ or $\kappa=-2$.\par
Now, for $\lambda\neq-\rho(r+4n^2)-\frac{1}{2}(D \lambda+\rho Dr)$ from the equation \eqref{4.15} we infer that $\kappa=-2$, \eqref{4.1} implies
\begin{equation}\label{4.16}
\mathcal{S}^*(\omega_1,\omega_2)=-4\{g(\omega_1,\omega_2)-\eta(\omega_1)\eta(\omega_2)\}.
\end{equation} Thus the $*$-Ricci tensor is $\eta$-Einstein manifold.\par
 \noindent Again from \eqref{4.16} and the proposition $4.1$ of \cite{dile}. We finally conclude that $M$ is locally isometric to $\mathbb{H}^{n+1}(-4)\times\mathbb{R}^n$, where $\mathbb{H}^{n+1}(-4)$ is the hyperbolic space of constant curvature $-4$.\par
\noindent Also, for any vector fields $\omega_1$ and $\omega_2$ on $M$. By hypothesis $\lambda\neq-\rho(r+4n^2)-\frac{1}{2}(D \lambda+\rho Dr)$, from the equation \eqref{4.15} we infer that $\kappa=2\alpha$. Again from $\alpha^2=-(\kappa+1)$ we get $\alpha=-1$ and $\kappa=-2$, putting the value of $\kappa$ in \eqref{4.1} $\mathcal{S}^*(\omega_1,\omega_2)=0$ i.e. $*$-Ricci flat. The proof is now complete.
\end{proof}

\section{Gradient almost $*-$R-B-S}
In this section, our focus is on the study of gradient almost $*-$R-B-Ss on Kenmotsu manifolds. We are aware that an almost $*-$R-B-S, satisfying equation \eqref{1.2} with some smooth functions $\lambda$, serves as a generalization of a Ricci-Yamabe soliton. In a previous work by Ghosh \cite{ghosh}, Ricci almost solitons on Kenmotsu manifolds were investigated, and it was proven that if a Kenmotsu metric is a gradient Ricci almost soliton and the Reeb vector field $\xi$ leaves the scalar curvature $r$ invariant, then the manifold is an Einstein manifold. To extend and generalize the aforementioned results, we consider the case of gradient almost $*-$R-B-Ss on Kenmotsu manifolds and establish the following theorem.
\begin{thm}
Suppose a Kenmotsu manifold $M^{2n+1}(\phi,\xi,\eta,g)$ admits a gradient almost $*-$R-B-S, and the Reeb vector field $\xi$ preserves the scalar curvature $r$. In that case, $(M,g)$ is an Einstein manifold with a constant scalar curvature given by $r=-n(2n-1)$.
\end{thm}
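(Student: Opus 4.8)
The plan is to start from the gradient almost $*$-R-B-S equation \eqref{1.3} rewritten via Lemma 3.2, i.e. replace $\mathcal{S}^*$ by $\mathcal{S}+(2n-1)g+\eta\otimes\eta$ and $r^*$ by $r+4n^2$, so that \eqref{1.3} becomes
\begin{equation*}
\nabla_{\omega_1}Df+Q\omega_1+(2n-1)\omega_1+\eta(\omega_1)\xi=(\lambda+\rho(r+4n^2))\omega_1,
\end{equation*}
after identifying $Hess_f(\omega_1,\omega_2)=g(\nabla_{\omega_1}Df,\omega_2)$ and using $g(Q\omega_1,\omega_2)=\mathcal{S}(\omega_1,\omega_2)$. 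Set $\beta=\lambda+\rho(r+4n^2)-(2n-1)$ so that $\nabla_{\omega_1}Df=-Q\omega_1-\eta(\omega_1)\xi+\beta\omega_1$. From here the standard recipe for gradient solitons on Kenmotsu manifolds applies: compute the curvature operator $R(\omega_1,\omega_2)Df$ by taking $\nabla_{\omega_2}$ of the above, antisymmetrizing, and using Lemma 3.1 (the two $\nabla Q$ identities) together with \eqref{2.9}, \eqref{2.11}, \eqref{2.12}.

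Next I would contract and specialize. Taking the inner product of $R(\omega_1,\omega_2)Df$ with $\xi$ and using \eqref{2.11} gives one relation; putting $\omega_1=\xi$ in the Hessian equation and using $Q\xi=-2n\xi$ gives $\nabla_\xi Df=(\beta+2n-1)\xi$, and $\xi f$ can be controlled by pairing with $\xi$. The hypothesis that $\xi$ preserves $r$, i.e. $\xi(r)=0$, is the key extra input: it should kill the $\xi(r)$-terms that otherwise obstruct the argument, and (since $\lambda$ is a function of the soliton while $\rho$ is constant) it feeds into controlling $\xi(\beta)$. The aim is to derive that $Df$ is pointwise proportional to $\xi$, or directly that $Qf$-type terms force $Q\omega_1=-2n\omega_1$ off a measure-zero set. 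A clean route: show $\xi(f)$ is constant along $\xi$ and then use the warped-product structure $I\times_{ce^t}N^{2n}$ of a Kenmotsu manifold to argue $f$ depends only on $t$, or alternatively use Theorem 3.4 (the manifold is already $\eta$-Einstein once it admits an almost $*$-R-B-S) to reduce $Q$ to the form $\{(\lambda+\rho(r+4n^2))-2n\}g-(\lambda+\rho(r+4n^2))\eta\otimes\eta$ and then show the $\eta\otimes\eta$ coefficient must vanish.

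Concretely, I expect the decisive computation to be: substitute the $\eta$-Einstein expression for $Q$ from Theorem 3.4 into the Hessian equation, take the covariant derivative, antisymmetrize to get $R(\omega_1,\omega_2)Df$, then contract over $\omega_1$ to compare with the contracted second Bianchi identity $\mathrm{div}\,Q=\tfrac12 Dr$. Using $\xi(r)=0$ and \eqref{2.12}-type relations, this should yield an equation of the form $(\text{coefficient})\,\eta(\omega_1)\xi + (\dots)Df=0$ forcing the $\eta$-Einstein ``$b$''-coefficient to be zero, hence $Q=-2ng$, i.e. the manifold is Einstein; then tracing $Q=-2ng$ over a $(2n+1)$-dimensional manifold gives $r=-2n(2n+1)$ — but the claimed value is $r=-n(2n-1)$, so in fact the decisive step must instead be an algebraic consistency between the two expressions for $\beta$ evaluated at $\xi$ versus on $\ker\eta$, pinning $\lambda$ and thereby $r$ to the stated constant.

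The main obstacle will be handling the smooth (non-constant) function $\lambda$: unlike the constant-$\lambda$ case, $D\lambda$ and $\xi(\lambda)$ appear throughout, and one must show they are controlled purely by the hypothesis $\xi(r)=0$ and the soliton identities, without circular reasoning. I would isolate this by first proving $\xi(\lambda)$ is determined (likely $\xi(\lambda)=0$ as well, or a fixed multiple of known quantities), using the trace of the soliton equation $\Delta f + r^* = (2n+1)(\lambda+\rho r^*)$ differentiated along $\xi$ together with the commutation of $\Delta$ and $\xi$ up to curvature terms; once $\lambda$ is shown to be constant (or its $\xi$-derivative vanishes), the argument collapses to the known gradient R-B-S situation and the Einstein conclusion with the explicit scalar curvature follows by the trace computation and the $\eta$-Einstein reduction.
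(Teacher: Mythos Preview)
Your opening moves coincide with the paper's: rewrite \eqref{1.3} via Lemma~3.2 to get $\nabla_{\omega_1}Df+Q\omega_1+\{(2n-1)-(\lambda+\rho r^*)\}\omega_1+\eta(\omega_1)\xi=0$, compute $R(\omega_1,\omega_2)Df$, set $\omega_2=\xi$ and invoke Lemma~3.1, then contract over $\omega_1$ using $\operatorname{div}Q=\tfrac12 Dr$. Where you go astray is the endgame. The side routes you propose --- the warped-product structure, importing the $\eta$-Einstein form from the earlier theorem, trying to force the $\eta\otimes\eta$ coefficient to vanish directly --- are not what the paper does, and your lingering worry about controlling $D\lambda$ separately is a symptom of not having found the actual closing mechanism.

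The two steps you are missing are these. First, combining $R(\omega_1,\xi)Df$ with \eqref{2.11} gives immediately $\omega_1\bigl(f-(\lambda+\rho r^*)\bigr)=\xi\bigl(f-(\lambda+\rho r^*)\bigr)\,\eta(\omega_1)$: the gradient of the \emph{combination} $f-(\lambda+\rho r^*)$ (not of $f$ alone) is parallel to $\xi$. Substituting this back yields an $\eta$-Einstein form for $Q$ whose two coefficients depend only on the single scalar $\xi\bigl(f-(\lambda+\rho r^*)\bigr)$; this absorbs all the $\lambda$- and $r^*$-derivatives in one stroke, so no separate control of $\xi(\lambda)$ is needed. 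Second, the paper traces the identity \eqref{3.2} of Lemma~3.1 to obtain $\xi(r)$ as an explicit affine function of $r$, and combines this with the $\omega_2=\xi$ case of the contracted curvature identity (your $\operatorname{div}Q=\tfrac12 Dr$ step), which expresses $\xi(r)$ in terms of $\xi\bigl(f-(\lambda+\rho r^*)\bigr)$. Together these solve for $\xi\bigl(f-(\lambda+\rho r^*)\bigr)$ purely in terms of $r$, so the $\eta$-Einstein coefficients become functions of $r$ alone; the hypothesis $\xi(r)=0$ then fixes $r$ as a constant and the Einstein conclusion follows from the resulting explicit form of $Q$. Your observation that on a Kenmotsu manifold $Q\xi=-2n\xi$ forces an Einstein metric to have $r=-2n(2n+1)$ is well taken; the discrepancy with the paper's stated value $-n(2n-1)$ reflects arithmetic slips in the paper's final lines rather than a different mechanism, so aim your proof at the chain above.
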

\begin{proof}
The gradient of the soliton equation \eqref{1.3} can be written for any $\omega_1$ belongs to $\chi(M)$ as
\begin{equation}\label{5.1}
\nabla_{\omega_1}Df+Q\omega_1+\{(2n-1)-(\lambda+\rho r^*)\}\omega_1+\eta(\omega_1)\xi=0.
\end{equation}
Then applying the expression of Riemannian curvature tensor $R(\omega_1,\omega_2)Df=\nabla_{\omega_1}\nabla_{\omega_2}Df-\nabla_{\omega_2}\nabla_{\omega_1}Df-\nabla_{[\omega_1,\omega_2]}Df$, we obtain
\begin{eqnarray}\label{5.2}
&& R(\omega_1,\omega_2)Df=(\nabla_{\omega_2}Q)\omega_1-(\nabla_{\omega_1}Q)\omega_2-\omega_2(\lambda+\rho r^*)\omega_1 \nonumber \\
&&+\omega_1(\lambda+\rho r^*)\omega_2  +\{\eta(\omega_2)\omega_1-\eta(\omega_1)\omega_2\}
\end{eqnarray}
for all $\omega_1$, $\omega_2$ $\in$ $\chi(M)$. Now putting $\omega_2=\xi$ in \eqref{5.2} and using \eqref{3.1} and \eqref{3.2}, we get
\begin{equation}\label{5.3}
R(\omega_1,\xi)Df=-Q\omega_1-2n\omega_1-\xi(\lambda+\rho r^*)\omega_1+\omega_1(\lambda+\rho r^*)\xi+(\omega_1-\eta(\omega_1)\xi),
\end{equation}
for any $\omega_1$ $\in$ $\chi(M)$. By virtue of \eqref{2.11}, equation \eqref{5.3} reduces to
\begin{equation} \label{5.4}
  \omega_1(f-(\lambda+\rho r^*))\xi=-Q\omega_1+\{\xi(f-(\lambda+\rho r^*))-2n+1\}\omega_1-\eta(\omega_1)\xi
\end{equation}
for any $\omega_1$ $\in$ $\chi(M)$. Now, taking an inner product of \eqref{5.4} with $\xi$ and using \eqref{2.11}, we get $\omega_1(f-(\lambda+\rho r^*))=\xi(f-(\lambda+\rho r^*))\eta(\omega_1)$. Putting this into \eqref{5.4}, we obtain
\begin{equation}\label{5.5}
Q\omega_1=\{\xi(f-(\lambda+\rho r^*))-2n+1\}\omega_1+\{\xi(f-(\lambda+\rho r^*))-1\}\eta(\omega_1)\xi
\end{equation}
for any $\omega_1$ $\in$ $\chi(M)$. This demonstrates that the manifold $(M,g)$ is an $\eta$-Einstein manifold. To further analyze the situation, we contract equation \eqref{5.2} over $\omega_1$ with respect to an orthonormal basis $\{e_i\}$, where $1\leq i\leq 2n+1$. By performing this computation, we obtain
\begin{equation}\label{5.6}
S(\omega_2,Df)=-\sum_{i=1}^{2n+1}g((\nabla_{e_i}Q)\omega_2,e_i)+\omega_2(r)-2n\omega_2(\lambda+\rho r^*)+2n\eta(\omega_2).
\end{equation}
Now, using the formula for the Riemannian manifold which is well known:
\begin{center}
$trac_g\{\omega_1\rightarrow(\nabla_{\omega_1}Q)\omega_2\}=\frac{1}{2}\omega_2(r)$
\end{center}
and \[\sum_{i=1}^{2n+1}g((\nabla_{e_i}Q)\omega_2,e_i)=1,\]
 then from \eqref{5.6}, we get
\begin{equation}\label{5.7}
S(\omega_2,Df)=\frac{1}{2}\omega_2(r)-2n\omega_2(\lambda+\rho r^*)+2n\eta(\omega_2)
\end{equation}
for any $\omega_1$ $\in$ $\chi(M)$. From \eqref{2.11}, we can compute $S(\xi,Df)=-2n\xi(f)$, putting this in \eqref{5.7}, we get $\xi(r)=4n\{\xi(\lambda+\rho r^*-f)-1\}$. Using this in the trace of \eqref{3.2}, we get $\xi(f-(\lambda+\rho r^*))=\frac{r}{n}+2n$. By this result, equation \eqref{5.5} reduces to
\begin{equation}\label{5.8}
Q\omega_1=(\frac{r}{n}+1)\omega_1+\{\frac{r}{n}+2n-1\}\eta(\omega_1)\xi
\end{equation}
for any $\omega_1$ $\in$ $\chi(M)$. Based on our assumption that $\xi(r)=0$, we can take the trace of equation \eqref{3.2} and utilize equation \eqref{5.8}. This leads us to the conclusion that $r=-n(2n-1)$. Consequently, the desired result follows from equation \eqref{5.8}.
\end{proof}
\section{Example of a 5-dimensional almost Kenmotsu manifold admitting an almost $*-$R-B-S}
Let $M=\{(x_1,y_1,z_1,u_1,v_1)(\neq 0) \in \mathbb{R}^5\}$ be a 5-dimensional manifold, where $(x_1,y_1,z_1,u_1,v_1)$ be the standard coordinates in $\mathbb{R}^5$. Now, let us consider a orthonormal basis $\{e_1, e_2, e_3, e_4, e_5\}$ of vector fields on $M$, where,
\begin{align*}
  e_1&=v_1 \frac{\partial}{\partial x_1}, & e_2&=v_1\frac{\partial}{\partial y_1}, & e_3&=v_1\frac{\partial}{\partial z_1}, & e_4&=v_1\frac{\partial}{\partial u_1}, & e_5&=-v_1\frac{\partial}{\partial v_1}.
\end{align*}
Define $(1,1)$ tensor field $\phi$ as follows:
\begin{align*}
   \phi(e_1)&=e_2, & \phi(e_2)&=-e_1, & \phi(e_3)&=e_4, & \phi(e_4)&=-e_3, & \phi(e_5)&=0.
\end{align*}

\noindent The Riemannian metric is given by
\[(g_{ij})=\left (\begin{array}{ccccc}
1 & 0 & 0 & 0 & 0\\
0 & 1 & 0 & 0 & 0\\
0 & 0 & 1 & 0 & 0\\
0 & 0 & 0 & 1 & 0\\
0 & 0 & 0 & 0 & 1\\
\end{array}\right )\]
and $\eta(\omega_1)=g(\omega_1,e_5)$ for any $\omega_1\in\chi(M^5)$. Then, for $\xi=e_5$, the relations \eqref{2.1}, \eqref{2.2} and \eqref{2.3} are satisfied. So, $M(\phi,\xi,\eta,g)$ is an almost contact metric manifold.\par
\noindent The non-zero components of the Levi-Civita connection $\nabla$ (using Koszul’s formula) are
\begin{eqnarray}\label{3.p}
\begin{array}{ccccc}
\nabla_{e_1}{e_1}=\nabla_{e_2}{e_2}=\nabla_{e_3}{e_3}=\nabla_{e_4}{e_4}=-e_5, \\
 \nabla_{e_1}{e_5}=e_1,~  \nabla_{e_2}{e_5}=e_2,~  \nabla_{e_3}{e_5}=e_3,~ \nabla_{e_4}{e_5}=e_4.
\end{array}
\end{eqnarray}
By virtue of this we can verify \eqref{2.8} and therefore $M^5(\varphi,\xi,\eta,g)$ is a Kenmotsu manifold.\par
\noindent Using the expression of curvature tensor $R(\omega_1,\omega_2) = [\nabla_{\omega_1},\nabla_{\omega_2}] - \nabla_{[\omega_1,\omega_2]}$, we now compute the following non-zero components
\begin{align*}
R(e_1,e_2)e_2&=-e_1, & R(e_1,e_3)e_3&=-e_1, & R(e_1,e_4)e_4&=-e_1,\\
R(e_1,e_5)e_5&=-e_1, & R(e_1,e_2)e_1&=e_2, & R(e_1,e_3)e_1&=e_3,\\
R(e_1,e_4)e_1&=e_4, & R(e_1,e_5)e_1&=e_5, & R(e_2,e_3)e_2&=e_3,\\
R(e_2,e_4)e_2&=e_4, & R(e_2,e_5)e_2&=e_5, & R(e_2,e_3)e_3&=-e_2,\\
R(e_2,e_4)e_4&=-e_2, & R(e_2,e_5)e_5&=-e_2, & R(e_3,e_4)e_3&=e_4,\\
R(e_3,e_5)e_3&=e_5, & R(e_3,e_4)e_4&=-e_3, & R(e_4,e_5)e_4&=e_5,\\
R(e_5,e_3)e_5&=e_3, & R(e_5,e_4)e_5&=e_4.
\end{align*}
Using this, we compute the components of the Ricci tensor as follows,\\
\centerline{$S(e_i,e_i)=-4$ ~for $i=1, 2, 3, 4, 5$}, and therefore
\begin{eqnarray}\label{3.p1}
S(\omega_1,\omega_2)=-4 g(\omega_1,\omega_2) ~~for ~all~ \omega_1, \omega_2\in \chi(M^5).
\end{eqnarray}
Also from \eqref{3.3} and \eqref{3.p1}, we get
\begin{eqnarray}\label{6.p}
S^*(\omega_1,\omega_2)=-4[g(\omega_1,\omega_2)-\eta(\omega_1)\eta(\omega_2)]  ~~for ~all~ \omega_1, \omega_2\in \chi(M^5).
\end{eqnarray}
Let $f:M\rightarrow \mathbb{R}$ be a smooth function defined by
\begin{eqnarray}\label{3.p11}
f(x_1, y_1, z_1, u_1, v_1)={x_1}^2+{y_1}^2+{z_1}^2+{u_1}^2+\frac{{v_1}^2}{2}.
\end{eqnarray}
The potential vector field is given by,
\begin{eqnarray}\label{3.p111}
\omega=Df=2 x_1 \frac{\partial}{\partial x_1}+2y_1 \frac{\partial}{\partial y_1}+2z_1 \frac{\partial}{\partial z_1}+2u_1\frac{\partial}{\partial u_1}+v_1\frac{\partial}{\partial v_1}.
\end{eqnarray}
Then with the help of \eqref{3.p} we can show that
\begin{eqnarray}\label{3.p2}
(\mathcal{L}_\omega g)(\omega_1,\omega_2)=2\{ g(\omega_1,\omega_2)-4\eta(\omega_1)\eta(\omega_2)\},
\end{eqnarray}
for all $\omega_1$, $\omega_2$ $\in\chi(M^5)$.
So, combining \eqref{6.p} and \eqref{3.p2}, we observe that soliton Eq. \eqref{1.2} holds for $\lambda=16\rho-3$ i.e., the metric $g$ is gradient almost $*-$R-B-S with this potential vector field $V=Df$ and $\lambda=16\rho-3$. Also, equation \eqref{6.p} shows that the manifold becomes $\eta$-Einstein. So, it satisfies \textbf{Theorem 3.5}.
\section{Conclusion and Physical Applications}
  The notion of an almost $*-$R-B-S is a recent and significant concept in the field of differentiable manifolds, with applications in mathematical physics, quantum cosmology, quantum gravity, and black holes. It provides a framework for understanding geometric and physical phenomena in spacetimes with relativistic viscous fluid, heat flux, stress, dark and dust fluids, and radiation era. The concept of an $*$-almost R-B-S and gradient almost $*-$R-B-S can be examined to discuss the RG flow of mass in 2 dimensions. An almost R-B-S is crucial in understanding energy and entropy concepts in general relativity, analogous to the heat equation that governs the heat loss of an isolated system towards thermal equilibrium.

\noindent In our study, we explore the kinetic and potential nature of relativistic spacetime and apply it to cosmology and general relativity. We present physical models of three classes, namely, shrinking, steady, and expanding perfect and dust fluid solutions of an almost $*-$R-B-S spacetime. The shrinking case ($\Omega < 0$) exists on a minimal time interval $-1 < t < b$, where $b < 1$. The steady case ($\Omega = 0$) exists for all time, while the expanding case ($\Omega > 0$) exists on a maximal time interval $a < t < 1$, where $a > -1$. These three classes illustrate examples of ancient, eternal, and immortal solutions, respectively. The physical applications of almost $*-$R-B-Ss can be further explored, as briefly discussed in references \cite{Duggal, Woolgar}.

\noindent Our article raises several questions that can be addressed in future research:

(i) Is \textbf{Theorem 5.1} applicable if the scalar curvature $r$ is not invariant?

(ii) Which results from our paper hold true for nearly Kenmotsu manifolds, $f$-Kenmotsu manifolds, or K\"{a}hler manifolds?
\section*{Conflict of interest}
The corresponding author states that there is no conflict of interest.

\section*{Data Availability}
No data is required in this manuscript.

\section*{Acknowledgments} \noindent This work was supported and funded by the Deanship of Scientific Research at Imam Mohammad Ibn Saud Islamic University (IMSIU) (grant number IMSIU-RP23105). The seventh author (Richard Pincak) is thankful to Slovak Science Agency for providing partial financial by VEGA fund under grant number VEGA 2/0076/23.\\

\end{document}